\documentclass[lettersize,journal]{IEEEtran}
\usepackage{amsfonts}
\usepackage{algorithmic}
\usepackage{algorithm}
\usepackage{array}
\usepackage[caption=false,font=normalsize,labelfont=sf,textfont=sf]{subfig}
\usepackage{textcomp}
\usepackage{stfloats}
\usepackage{url}
\usepackage{verbatim}
\usepackage{graphicx}
\usepackage{cite}
\usepackage{amsmath, amsthm, amssymb}
\usepackage{multirow}
\usepackage{booktabs}
\usepackage[utf8]{inputenc}

\newtheorem{proposition}{Proposition}

\begin{document}

\title{Attribute-Prompted Kernel Hashing for Unsupervised \\ Data-Efficient Cross-Modal Retrieval}

\author{Runhao~Li, Xiaoxu~Ma, Zhenyu~Weng, Yue~Zhang, Guibo~Luo, Huiping~Zhuang, Zhiping~Lin,~\IEEEmembership{Life~Senior~Member,~IEEE}, and Yap-Peng~Tan,~\IEEEmembership{Fellow,~IEEE}

\thanks{
Runhao~Li, Zhiping~Lin, and Yap-Peng~Tan are with the School of Electrical and Electronic Engineering, Nanyang Technological University, Singapore. Yap-Peng~Tan is also with VinUniversity, Hanoi, Vietnam (e-mail: runhao001@e.ntu.edu.sg; ezplin@ntu.edu.sg; eyptan@ntu.edu.sg / yp.t@vinuni.edu.vn).}

\thanks{
Xiaoxu~Ma, Zhenyu~Weng, and Huiping~Zhuang are with Shien-Ming Wu School of Intelligent Engineering, South China University of Technology, Guangzhou, China (e-mail: \mbox{xiaoxuma0723@gmail.com}; \mbox{wzytumbler@gmail.com}; \mbox{hpzhuang@scut.edu.cn}). (Corresponding Author: Zhenyu~Weng.)}

\thanks{
Yue~Zhang is with the College of Computer and Information Engineering, Henan Normal University, China (e-mail: 2023121@htu.edu.cn)}

\thanks{
Guibo~Luo is with Guangdong Provincial Key Laboratory of Ultra High Definition Immersive Media Technology, Peking University Shenzhen Graduate School, China (e-mail: luogb@pku.edu.cn)}}


\maketitle

\begin{abstract}
Unsupervised cross-modal hashing enables efficient retrieval of semantically related instances across different modalities without requiring manual semantic annotation.
However, existing unsupervised methods rely heavily on large-scale image-text pairs. Collecting such data can be costly, particularly in scenarios where well-aligned pairs are scarce due to privacy and specialized constraints.
More critically, existing methods tend to overfit to seen training data, restricting their generalization performance on unseen categories that the constrained training data cannot cover. To address these limitations, we propose Attribute-Prompted Kernel Hashing (APKH), a novel data-efficient approach that constructs a compact, modality-aligned Hamming space driven by the generalized attribute priors of vision-language foundation models. Specifically, APKH introduces two core modules: Context-optimized Attribute Kernel Mapping (CAKM) and Kernel-Smoothed Contrastive Alignment (KSCA). CAKM formulates cross-modal alignment through hyperspherical Radial Basis Function kernel mapping, optimizing dynamic attribute kernels via prompt learning to capture modality-invariant semantics. Furthermore, KSCA extends conventional point-to-point contrastive learning by modeling limited paired data as continuous kernel distributions.
This explicit smoothing of the modality gap alleviates overfitting to sparse pairwise correlations. Extensive experiments demonstrate that APKH outperforms state-of-the-art hashing methods in the challenging cross-modal retrieval tasks from seen to unseen categories under data-constrained scenarios.
\end{abstract}

\begin{IEEEkeywords}
Unsupervised cross-modal hashing, kernel methods, prompt learning, data-efficient learning.
\end{IEEEkeywords}

\section{Introduction}

\IEEEPARstart{T}{he} global proliferation of multimedia data, driven by social media, mobile sensors, and content-sharing networks, has necessitated advanced cross-modal retrieval systems capable of bridging the semantic gap between disparate data types, such as text and images. While cross-modal retrieval is essential for identifying semantically related instances across modalities, traditional methods based on real-valued representation learning are increasingly constrained by high storage costs and especially computational latency as datasets scale. To address these challenges, cross-modal hashing (CMH) has emerged as a vital solution, transforming high-dimensional heterogeneous data into a compact binary Hamming space. By mapping instances into binary codes, CMH facilitates efficient similarity measurement through bitwise XOR operations while simultaneously reducing storage requirements. This provides a scalable and computationally efficient framework for managing the exponential growth of large-scale multimedia databases.

The development of CMH has led to two primary trajectories: supervised learning\cite{jiang2017deep,li2024ckdh,li2023neighborhood,qin2024deep,gu2025dual,chen2025pfedlah,tu2025cross,11341891}, which leverages semantic labels to ensure high retrieval precision, and unsupervised learning\cite{kumar2011learning,liu2020joint,su2019deep,zhang2018unsupervised,creswell2018generative,hu2020creating,yu2021deep,hu2022UCCH,xia2023clip,yu2022self,xi2023unsupervised,zhu2022work,zhuo2022clip4hashing,sun2024dual,li2023clip,mingyong2023clip,10463060}, which eliminates the need for costly manual annotation by exploiting inherent co-occurrence relationships within paired data. While supervised approaches are often constrained by the scarcity of labeled data in specialized domains, unsupervised methods have gained prominence for their ability to capture complex, non-linear semantic associations directly from raw, unlabeled multi-modal inputs.
Technically, these unsupervised methods typically bridge the heterogeneity gap by constructing a dual-stream architecture consisting of two modality-specific hash functions.
The training process of existing unsupervised methods generally follows a similarity-reconstruction paradigm. First, they construct a global or batch-wise instance similarity matrix by calculating distances or mining latent correlations between deep features of co-occurring image-text pairs. Subsequently, the two modality-specific hash functions are optimized to project heterogeneous features into a unified Hamming space while ensuring that the Hamming distances between binary codes faithfully reconstruct the pre-defined similarity structure.

Despite these advances, unsupervised CMH methods still require a considerable amount of image-text pairs to effectively train the hash functions. However, these image-text pairs are often scarce and expensive to acquire, particularly when dealing with specialized or private databases where access to well-aligned image-text pairs is limited due to domain-specific constraints or privacy concerns. Recent vision-language foundation models like CLIP~\cite{radford2021learning} provide a powerful toolset to address these semantic deficiencies, as they possess an intricate high-dimensional semantic space that exhibits complex, non-linear characteristics. In the correspondence-constrained scenarios where paired data is scarce, existing frameworks that rely on modality-specific hash functions are still highly prone to overfitting when training samples are sparse. This often leads to a fractured Hamming space with the underlying modality-invariant semantic relationships poorly preserved.

To address these limitations, we propose Attribute-Prompted Kernel Hashing (APKH), a novel framework designed for data-efficient cross-modal retrieval. As shown in Fig.~\ref{fig:teaser}, unlike conventional approaches that rely on modality-specific hash functions and pointwise training for alignment, APKH reformulates the problem within a kernel-based paradigm to mitigate overfitting and enhance robustness in navigating the complex semantic manifold of CLIP.
Fig.~\ref{fig:framework} illustrates the overall architecture of the proposed APKH. The framework introduces two core modules: Context-optimized Attribute Kernel Mapping (CAKM) and Kernel-Smoothed Contrastive Alignment (KSCA).
CAKM leverages the powerful attribute priors of CLIP to establish an adaptive kernel-based alignment framework through prompt learning. It utilizes learnable context vectors to dynamically generate semantic attribute-driven kernels. This module constructs a set of modality-invariant reference attributes that act as Radial Basis Function (RBF) centers in a semantic kernel space.  This adaptive mechanism allows the model to transform the relationship between images and text into an attribute kernel response mapping, where complex correspondences become manageable and more robust.
KSCA further enhances correspondence efficiency by transitioning from discrete point-to-point learning to continuous kernel distribution alignment.  Inspired by Kernel Density Estimation~\cite{davis2011remarks} and Parzen Window theory~\cite{parzen1962estimation}, KSCA models limited paired data as local density distributions rather than isolated points and generates smoothed samples that fill the modality gap in the cross-modal manifold.  This smoothing mechanism acts as a powerful regularizer, ensuring that the hash function learns a smooth, modality-invariant Hamming space that generalizes beyond the sparse training samples. 
\begin{figure*}[]
\centering
\includegraphics[width=1.99\columnwidth]{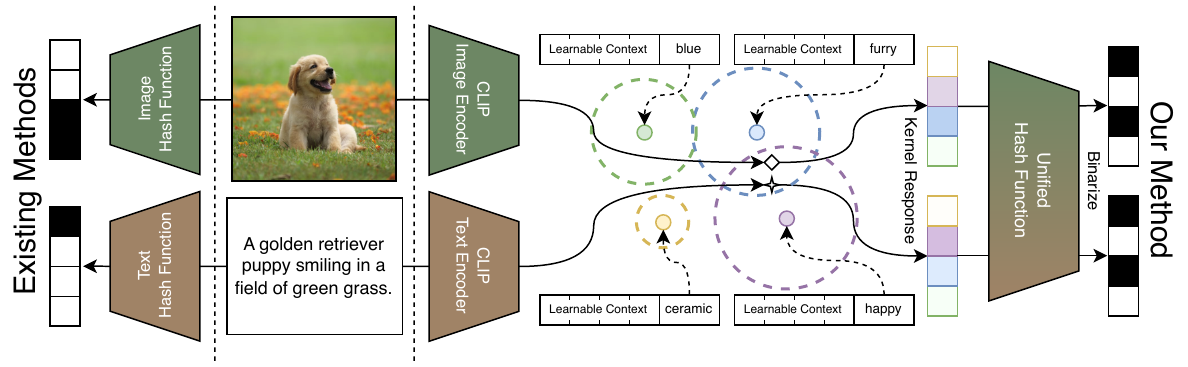} 
\caption{The inference pipelines of existing methods vs. our proposed approach. While existing methods (left) map inputs using separate modality-specific hash functions, APKH (right) calculates modality-invariant kernel responses with learnable context and generates binary codes through a unified hash function.}
\label{fig:teaser}
\end{figure*}
The main contributions of this work are summarized
as follows:
\begin{itemize}

\item We introduce APKH, a novel correspondence-efficient paradigm that utilizes the generalized attribute priors of vision-language foundation models (e.g., CLIP) to achieve robust modality alignment and effectively alleviates overfitting in data-constrained scenarios. 

\item We design CAKM, which reformulates cross-modal alignment with RBF kernel mapping. By adapting CLIP’s latent space through context-optimized attribute kernels, CAKM transforms CLIP’s multi-modal semantic manifold into modality-invariant kernel responses.

\item We develop KSCA, a statistical regularization strategy based on adaptive kernel smoothing. It explicitly fills the modality gap by modeling paired data as continuous kernel distributions, mitigating overfitting in data-constrained scenarios.  

\item Extensive evaluations on four widely-used datasets demonstrate that APKH outperforms existing hashing methods with limited training image-text pairs.

\end{itemize}

This work significantly extends our preliminary work, GNAH~\cite{11084616}. To overcome GNAH’s inherent dependency on training-constrained prototypes and modality-specific projectors, this work introduces a novel generalized data-efficient CMH framework with three major advancements: 
(1) restructuring the framework by replacing static prototypes and dual-stream projectors with a unified hash function driven by dynamic, context-optimized attribute kernels, while streamlining the complicated loss weighting scheme into a more elegant optimization objective;
(2) providing a theoretical proof of RBF-contrastive congruence to establish a mathematical foundation for the alignment mechanism;
and (3) validating the approach through more comprehensive evaluation, covering multi-bit scalability, seen-to-unseen transferability, and fine-grained ablations.

\section{Related Work}

\subsection{Unsupervised Cross-Modal Hashing}
Unsupervised cross-modal hashing (CMH) methods aim to learn hashing networks without requiring labeled data. Instead, they rely on the co-occurrence and other semantic relationships among instances, leveraging information from multiple modalities. While early methods based on hand-crafted features (e.g., CVH~\cite{kumar2011learning}, FSH~\cite{liu2017cross}) and subsequent deep learning approaches~\cite{yu2021deep,liu2020joint,su2019deep} progressively improved retrieval capabilities, they have largely been superseded by CLIP-based techniques. The current landscape of CLIP-driven hashing either employs the foundation model as a powerful backbone for feature extraction (e.g., UCMFH~\cite{xia2023clip}, SACH~\cite{yu2022self}) or uses its pre-trained knowledge to generate similarity matrices that supervise hash code learning (e.g., CDUH~\cite{xi2023unsupervised}, CLIP4Hashing~\cite{zhuo2022clip4hashing}, CAGAN~\cite{li2023clip}, CFRH~\cite{mingyong2023clip}, CDTH~\cite{10463060}).
Despite their significant advancements, a critical limitation persists: these state-of-the-art methods generally exhibit limited data efficiency. Because most of these frameworks rely on modality-specific hash functions and rigid point-to-point alignment paradigms, they require a substantial volume of well-aligned image-text pairs to adequately stabilize the cross-modal projection process. Consequently, when training data is highly constrained, these models are highly susceptible to overfitting to sparse pairwise correspondences, failing to preserve a robust semantic space.

\subsection{Data-Efficient Adaptation of Foundation Models via Prompt Learning}

As foundation models have grown in scale and prominence, prompt learning~\cite{lester2021power,zhou2022learning,yao2024tcp,li2025class,xiong2025class,li2025map} has emerged as a highly efficient alternative to costly full model fine-tuning. Originally introduced in NLP~\cite{lester2021power}, this paradigm optimizes a compact set of continuous soft prompt vectors appended to the input, flexibly steering the behavior of a frozen model without updating its massive weights. Learned dynamically via standard backpropagation, these continuous embeddings effectively replace laborious manual prompt engineering. Despite updating only a fraction of parameters, they achieve downstream performance comparable to full fine-tuning. Driven by these advantages, this parameter-efficient strategy has successfully transitioned to vision-language models~\cite{zhou2022learning,yao2024tcp,li2025class,xiong2025class}. For instance, CoOp~\cite{zhou2022learning} applies learnable soft prompts to the text encoder of a frozen CLIP model~\cite{radford2021learning}. This yields significant accuracy gains and strong cross-domain transferability with minimal training data.

\subsection{Kernel Methods}
Kernel methods have long been a fundamental technique in machine learning for analyzing non-linear patterns by implicitly mapping input data into a Reproducing Kernel Hilbert Space, where complex relationships become linearly separable~\cite{hofmann2008kernel}. In the early stages of large-scale retrieval, pioneering works such as KLSH~\cite{kulis2009kernelized} and KSH~\cite{liu2012supervised} were proposed to capture non-linear semantic similarities. Recent methods like LcKTLSH~\cite{maggu2025label} have attempted to introduce kernel mappings to cross-modal hashing. However, most existing kernel-based frameworks still rely on pre-defined kernel functions, most notably static RBF kernels, and hand-crafted anchor points tailored for early shallow features. Consequently, these static mechanisms lack the expressive flexibility to navigate the complex multi-modal semantic manifold of CLIP, nor can they interact with foundation models via prompt learning.

\begin{figure*}[]
\centering
\includegraphics[width=1.99\columnwidth]{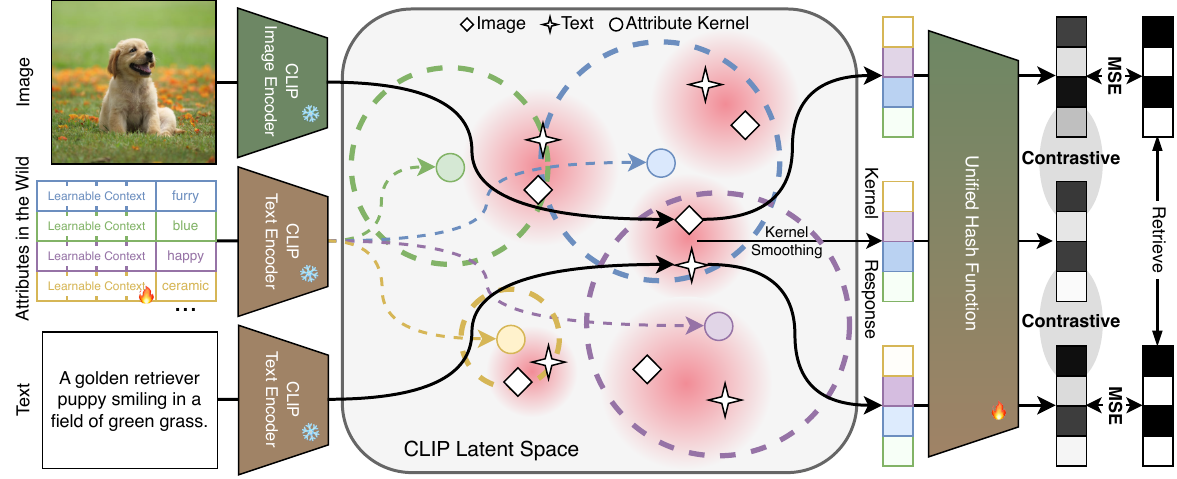} 
\caption{Framework of our proposed method. Heterogeneous inputs (images and texts) are first encoded by frozen CLIP models. Our method introduces learnable attribute prompts to construct adaptive semantic kernels. The extracted image/text representations are converted into modality-invariant kernel responses. Concurrently, a kernel smoothing mechanism is applied to bridge the semantic modality gap and mitigate overfitting. A unified hash function then binarizes these smoothed responses into hash codes, which are optimized through kernel-smoothed contrastive alignment to ensure high-quality retrieval.}
\label{fig:framework}
\end{figure*}

\section{Methodology}
\subsection{Problem Formulation}

Let $\mathcal{D}=\{\mathbf{x}_i, \mathbf{y}_i\}_{i=1}^n$ denote a cross-modal dataset with $n$ image-text pairs, where $\mathbf{x}_i \in \mathbb{R}^d$ and $\mathbf{y}_i \in \mathbb{R}^d$ represent the normalized continuous feature embeddings extracted by a frozen vision-language foundation model (e.g., CLIP) for the $i$-th image and text instances, respectively.  Our goal is to map the continuous embeddings $\mathbf{x}_i$ and $\mathbf{y}_i$ into a compact $B$-bit Hamming space. 
Specifically, we first define a non-linear kernel mapping $\Phi: \mathbb{R}^d \rightarrow \mathbb{R}^K$ that transforms the heterogeneous embeddings into shared $K$-dimensional kernel response vectors, denoted as $\mathbf{v}_i^x = \Phi(\mathbf{x}_i)$ and $\mathbf{v}_i^y = \Phi(\mathbf{y}_i)$. Subsequently, a single unified hashing network $H(\cdot): \mathbb{R}^K \rightarrow \mathbb{R}^B$ is employed to map these continuous kernel responses into a compact $B$-bit Hamming space.
The outputs of the hashing network are relaxed hash codes defined as $\mathbf{h}_i^x=H(\mathbf{v}_i^x)$ and $\mathbf{h}_i^y=H(\mathbf{v}_i^y)$ and the corresponding binary codes are defined as $\mathbf{b}_i^x = \textrm{sgn}(\mathbf{h}_i^x)$ and $\mathbf{b}_i^y = \textrm{sgn}(\mathbf{h}_i^y)$, where sgn($\cdot$) is the element-wise sign function. To achieve correspondence-efficient retrieval under data-constrained scenarios, our proposed Attribute-Prompted Kernel Hashing (APKH) framework introduces two core modules: Context-optimized Attribute Kernel Mapping and Kernel-Smoothed Contrastive Alignment.

\subsection{Context-optimized Attribute Kernel Mapping}
The Context-optimized Attribute Kernel Mapping module serves as the primary alignment engine of the APKH framework. It resolves the erratic nature of direct instance matching by establishing an adaptive attribute kernel-based alignment space. By mathematically transforming the heterogeneous relationships between disparate images and text into a unified set of modality-invariant kernel responses, CAKM effectively stabilizes the cross-modal projection mechanism.

\textbf{Attribute-Prompted Kernel Generation.} We define a set of $K$ in-the-wild attributes $\{a_k\}_{k=1}^K$ that contains semantics shared across both image and text modalities (e.g., ``happy"). To generate adaptive, modality-invariant attribute kernels and bridge the modality gap while preserving attribute-aware semantic priors, each attribute is assigned $L$ learnable context tokens instead of static text embeddings. A context prompt is constructed for each attribute $a_k$:
\begin{equation}
P_k = [\mathbf{v}_k]_1 [\mathbf{v}_k]_2 \dots [\mathbf{v}_k]_L [a_k],
\end{equation}
where $[\mathbf{v_k}]_l$ represents learnable continuous context embeddings. These prompts are fed into the frozen CLIP text encoder $CLIP_T$ to generate $K$ modality-invariant semantic kernels:
\begin{equation}
\mathbf{c}_k = \text{Norm}(CLIP_T(P_k)) \in \mathbb{R}^{d \times 1}, \quad k \in \{1, \dots, K\}.
\end{equation}
The essence of attribute-prompted kernel generation lies in its capacity to utilize prompt tuning to forge an adaptive bridge between modalities without compromising the model's underlying semantic integrity. By optimizing learnable context tokens within the CAKM module, the generated kernels $\{\mathbf{c}_k\}$ evolve into dynamic semantic conduits that specifically adapt to the cross-modal manifold while strictly adhering to the high-level linguistic abstractions preserved from the pre-trained text branch. This mechanism is particularly pivotal in correspondence-constrained scenarios; by leveraging attribute-level universality, such as clustering diverse biological instances toward a common ``Furry" semantic center, the framework effectively compensates for the deficit of explicit instance-level alignment signals, ensuring robust generalization even with sparse training data. 

\textbf{Hyperspherical RBF Kernel Mapping.}
Unlike conventional CMH methods that rely on modality-specific hash functions, APKH reformulates cross-modal alignment as a structured feature mapping within a semantic kernel space. For any input feature $\mathbf{f}_i \in \{\mathbf{x}_i, \mathbf{y}_i\}$, its non-linear relationship to the $k$-th attribute kernel $\mathbf{c}_k$ is measured via a Gaussian Radial Basis Function (RBF) kernel:
\begin{equation}
\mathcal{K}_{RBF}(\mathbf{f}_i, \mathbf{c}_k) = \exp\left(-\frac{\|\mathbf{f}_i - \mathbf{c}_k\|^2}{2\sigma_k^2}\right),
\label{eq:rbf}
\end{equation}
where $\sigma_k$ controls the receptive field for the $k$-th attribute. To dynamically adapt the coverage of each attribute kernel based on its local semantic characteristics and accommodate the non-uniform density of the underlying CLIP manifold, we implement $\sigma_k$ as a learnable parameter optimized during training. This adaptive mechanism tailors the sensitivity of the $k$-th kernel, transforming it into a highly specialized, localized "domain expert" capable of capturing the distinct semantic nuances within its unique neighborhood.  

While standard RBF kernels operate on Euclidean distance, applying them to $\ell_2$-normalized representations inside a unit hypersphere $\mathbb{S}^{d-1}$ reveals a natural geometric congruence with modern contrastive image-language pre-training~\cite{radford2021learning}:
\begin{proposition}
For any $\ell_2$-normalized feature $\mathbf{f}_i$ and attribute kernel $\mathbf{c}_k$ (i.e., $\|\mathbf{f}_i\|_2 = \|\mathbf{c}_k\|_2 = 1$), the RBF kernel response $\mathcal{K}_{RBF}(\mathbf{f}_i, \mathbf{c}_k)$ is natively proportional to a temperature-scaled exponential cosine similarity used in contrastive learning~\cite{chen2020simple}.
\end{proposition}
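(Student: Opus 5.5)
The argument is a direct expansion of the squared Euclidean distance on the unit hypersphere, followed by reading off the resulting exponential in the form of a contrastive logit. Since $\|\mathbf{f}_i\|_2=\|\mathbf{c}_k\|_2=1$, the polarization identity gives
\begin{equation*}
\|\mathbf{f}_i-\mathbf{c}_k\|^2=\|\mathbf{f}_i\|^2+\|\mathbf{c}_k\|^2-2\,\mathbf{f}_i^\top\mathbf{c}_k = 2-2\cos(\mathbf{f}_i,\mathbf{c}_k),
\end{equation*}
because for unit vectors the inner product equals the cosine similarity. Substituting this into the Gaussian RBF of Eq.~\eqref{eq:rbf} and splitting the exponential of a sum into a product yields
\begin{equation*}
\mathcal{K}_{RBF}(\mathbf{f}_i,\mathbf{c}_k)=\exp\!\left(-\frac{1}{\sigma_k^2}\right)\exp\!\left(\frac{\cos(\mathbf{f}_i,\mathbf{c}_k)}{\sigma_k^2}\right).
\end{equation*}

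Next I identify the second factor with the temperature-scaled exponential similarity $\exp(\mathrm{sim}(\mathbf{f}_i,\mathbf{c}_k)/\tau)$ used in InfoNCE-style objectives~\cite{chen2020simple}, with effective temperature $\tau_k=\sigma_k^2$. The first factor $\exp(-1/\sigma_k^2)$ does not depend on $\mathbf{f}_i$. For a fixed kernel $k$, the response is therefore proportional to $\exp(\cos(\mathbf{f}_i,\mathbf{c}_k)/\tau_k)$, with proportionality constant $\exp(-1/\sigma_k^2)>0$. This is exactly the claimed proportionality.

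The main point needing care is how to read ``proportional'': the constant depends on $k$ through the learnable $\sigma_k$. I would state explicitly that the proportionality holds for each fixed attribute kernel, uniformly over all inputs $\mathbf{f}_i$ on $\mathbb{S}^{d-1}$. If a single global constant is wanted, I would add the remark that sharing $\sigma_k\equiv\sigma$ across kernels makes the constant $e^{-1/\sigma^2}$ common to all $K$ responses.

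I would also note a consequence. When the responses are normalized across kernels, as in a softmax over the attribute kernels, a shared $\sigma$ makes the constant cancel, recovering exactly the contrastive softmax. With kernel-specific $\sigma_k$, the constants $e^{-1/\sigma_k^2}$ do not cancel and instead enter the softmax as fixed additive log-offsets $-1/\sigma_k^2$ on each logit. I would also record that the map $\mathbf{f}_i\mapsto\mathcal{K}_{RBF}(\mathbf{f}_i,\mathbf{c}_k)$ is strictly increasing in the cosine similarity. This justifies interpreting the RBF mapping as preserving CLIP's contrastive geometry.
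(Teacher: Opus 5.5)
Your proof is correct and follows essentially the same route as the paper. Like the paper, you expand $\|\mathbf{f}_i-\mathbf{c}_k\|^2 = 2-2\cos(\mathbf{f}_i,\mathbf{c}_k)$ on the unit sphere, substitute into the RBF, factor out the input-independent constant $\exp(-1/\sigma_k^2)$, and set $\tau_k=\sigma_k^2$. Your extra remarks go beyond the paper but are correct: the constant is per kernel, and under the cross-kernel normalization it acts as a logit offset $-1/\sigma_k^2$.
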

\begin{IEEEproof}
Given the unit hyperspherical constraint $\|\mathbf{f}_i\|_2 = \|\mathbf{c}_k\|_2 = 1$, the squared Euclidean distance expands as:
\begin{equation}
\begin{aligned}
\|\mathbf{f}_i - \mathbf{c}_k\|^2 &= \|\mathbf{f}_i\|^2 + \|\mathbf{c}_k\|^2 - 2\mathbf{f}_i^T \mathbf{c}_k \\
&= 2 - 2\cos(\mathbf{f}_i, \mathbf{c}_k).
\end{aligned}
\end{equation}
Substituting this identity into Eq.~\eqref{eq:rbf} and defining the temperature parameter as $\tau_k = \sigma^2_k$ yields:
\begin{equation}
\begin{aligned}
\mathcal{K}_{RBF}(\mathbf{f}_i, \mathbf{c}_k) &= \exp\left(-\frac{2 - 2\cos(\mathbf{f}_i, \mathbf{c}_k)}{2\tau_k}\right) \\
&= \exp\left(-\frac{1}{\tau_k}\right) \cdot \exp\left(\frac{\cos(\mathbf{f}_i, \mathbf{c}_k)}{\tau_k}\right).
\end{aligned}
\end{equation}
Since $\exp(-1/\tau_k)$ is a constant with respect to the feature vector $\mathbf{f}_i$, we have:
\begin{equation}
\mathcal{K}_{RBF}(\mathbf{f}_i, \mathbf{c}_k) \propto \exp\left(\frac{\cos(\mathbf{f}_i, \mathbf{c}_k)}{\tau_k}\right).
\end{equation}
\end{IEEEproof}
The above proposition establishes a theoretical bridge between traditional RBF kernel methods and hyperspherical angular representation spaces, where the RBF kernel mapping of normalized features can be interpreted as a temperature-scaled exponential cosine similarity with any individual attribute kernel $\mathbf{c}_k$. By mapping features to kernel responses, APKH inherits the non-linear sensitivity of RBF networks without distorting CLIP's native angular topology, and provides a robust foundation for subsequent hash code generation. 

To transform the independent kernel responses into a unified semantic representation, we normalize them across all attribute anchors. This normalization converts the raw kernel activations into a relative distribution over the attribute vocabulary, where each response quantifies the semantic affinity between the input feature and a specific attribute kernel while inherently accounting for the receptive fields across all attributes:
\begin{equation}
S(\mathbf{f}_i, \mathbf{c}_k) = \frac{\mathcal{K}_{RBF}(\mathbf{f}_i, \mathbf{c}_k)}{\sum_{j=1}^K \mathcal{K}_{RBF}(\mathbf{f}_i, \mathbf{c}_j)}.
\end{equation}
Finally, the normalized responses across all $K$ kernels are concatenated to form a dense, modality-invariant kernel response vector $\mathbf{v}_i^* \in \mathbb{R}^K$:
\begin{equation}
\mathbf{v}_i^* = \big[ S(\mathbf{f}^*_i, \mathbf{c}_1), \dots, S(\mathbf{f}^*_i, \mathbf{c}_K) \big]^T, \;\; *\in\{x,y\}.
\end{equation}

\textbf{Unified Hash Function Encoding.}
By computing the hyperspherical RBF kernel responses across all $K$ attributes, the original heterogeneous image and text embeddings are implicitly mapped into a shared, modality-invariant semantic space. Specifically, the scalar responses are concatenated to form dense $K$-dimensional kernel response vectors, denoted as $\mathbf{v}_i^x \in \mathbb{R}^K$ and $\mathbf{v}_i^y \in \mathbb{R}^K$. Because these aggregated representations completely bridge the modality gap, APKH eliminates the need for complex, modality-specific hash functions. Instead, we feed both $\mathbf{v}_i^x$ and $\mathbf{v}_i^y$ into a single unified hash function, parameterized as a shared Multi-Layer Perceptron (MLP) denoted by $\Theta_H$. This unified network directly projects the $K$-dimensional kernel responses into a compact $B$-dimensional continuous Hamming space $\mathbf{h}_i^* = H(\mathbf{v}_i^*)$.
These relaxed hash codes $\mathbf{h}_i^x$ and $\mathbf{h}_i^y$ are then utilized as the input representations for the subsequent KSCA module to explicitly regularize the semantic manifold.

\subsection{Kernel-Smoothed Contrastive Alignment}
Standard cross-modal contrastive learning paradigms (e.g., InfoNCE loss) optimize the similarity between discrete point pairs. However, as established in the previous section, mapping sparse isolated points to a unified Hamming space inevitably results in fragile decision boundaries that fail to generalize to unseen categories. To mitigate this, we introduce Kernel-Smoothed Contrastive Alignment, which enables the hashing network to optimize over smoothed, continuous distributions. Instead of merely maximizing mutual information between individual image-text instances, our approach forces the network to align each anchor feature with the joint kernel distribution with its cross-modal counterpart.

\textbf{Kernel Distribution Generation.}
Specifically, KSCA transforms each discrete image-text pair into a localized, continuous probability distribution $\mathcal{P}_{i}$, which is modeled as a Gaussian kernel $\mathcal{N}(\boldsymbol{\mu}_{i}, \boldsymbol{\Sigma}_{i})$ within the CLIP latent space. This distribution is adaptively constructed by leveraging both the joint semantics and the empirical discrepancy of the modalities. The kernel centroid $\boldsymbol{\mu}_{i}$ is established as a semantic barycenter, computed by $\boldsymbol{\mu}_{i} = \text{Norm}(\mathbf{x}_i + \mathbf{y}_i)$. The kernel bandwidth is governed by an adaptive covariance $\boldsymbol{\Sigma}_{i} = \sigma_{i}^{2} \mathbf{I}$ to reflect the local semantic spread of the image-text pair around their shared semantic center, where $\sigma_{i}^{2}$ is the empirical variance of $\{\mathbf{x}_i, \mathbf{y}_i\}$ as $\sigma_{i}^{2} = \frac{1}{4} \|\mathbf{x}_i - \mathbf{y}_i\|_2^2$. The generated distribution $\mathcal{P}_{i}$ effectively populates the transitional semantic regions, allowing the model to capture the underlying manifold structure beyond sparse data points.

\textbf{Kernel-Smoothed Alignment Objective.}
To optimize the hashing network over these continuous distributions, we define the alignment objective as an expectation over the generated kernel distribution. Let $\mathbf{\tilde{f}} \sim \mathcal{P}_{i}$ denote the sample drawn from the variance-driven kernel distribution associated with the $i$-th pair and $\mathbf{\tilde{h}} = H(\Phi(\text{Norm}(\mathbf{\tilde{f}})))$ denote the corresponding virtual relaxed hash code.
The kernel-smoothed contrastive loss for the $i$-th pair is mathematically formulated as:
\begin{equation}
\mathcal{L}_{c}^{i} = - \sum\limits_{*} \log \frac{ \mathbb{E}_{\mathbf{\tilde{f}} \sim \mathcal{P}_{i}} \big[ \exp( (\mathbf{h}_{i}^{*})^T \mathbf{\tilde{h}} / \tau ) \big] }{ \mathcal{Z}^*_{i} },
\end{equation}
where $\tau$ is the temperature scaling parameter. Because the representations are $L_2$-normalized, the inner product $(\mathbf{h}_{i}^{*})^T \mathbf{\tilde{h}}$ directly measures their cosine similarity. The denominator $\mathcal{Z}^*_{i}$ represents the partition function across the entire batch, serving as the negative contrastive bounds:
\begin{equation}
\begin{split}
\mathcal{Z}^*_{i} ={} & \mathbb{E}_{\mathbf{\tilde{f}} \sim \mathcal{P}_{i}} \big[ \exp( (\mathbf{h}_{i}^{*})^T \mathbf{\tilde{h}} / \tau ) \big] \\
& + \sum_{j=1, j \neq i}^{N} \Big[ \exp( (\mathbf{h}_{i}^{*})^T \mathbf{h}_{j}^{*} / \tau ) + \exp( (\mathbf{h}_{i}^{*})^T \mathbf{h}_{j}^{\bar{*}} / \tau ) \Big],
\end{split}
\end{equation}
where $\bar{*}$ denotes the alternate modality of $*$. This formulation diverges fundamentally from conventional hashing losses. By optimizing the mathematical expectation directly, the network aligns the anchor with the entire continuous cross-modal manifold. 
In practice, this theoretical expectation is tractably approximated via Monte Carlo sampling during optimization, where $M$ virtual samples are randomly drawn from the kernel distribution $\mathcal{P}_{i}$ for each training epoch.

\subsection{Binarization and Overall Optimization}
\textbf{Binarization Error Reduction.}
Similar to prior methods~\cite{jiang2017deep,wang2017survey,tu2025cross,11341891,11084616}, to mitigate binarization errors caused by continuous relaxation and enhance the representational integrity of our hash functions, we incorporate a binarization loss function defined as:
\begin{equation}
    \mathcal{L}_{b}^{i}= \sum\limits_{*} \|\mathbf{h}_i^*-\text{Norm}(\mathbf{b}_i^*)\|^2_2.
	\label{eq:quant}
\end{equation}
This loss function ensures that $\mathbf{h}_i^*$ remains close to their corresponding binary hash codes $\mathbf{b}_i^*$. By guiding samples to approximate their binary counterparts, their respective pairwise neighborhoods are also drawn along, thereby facilitating a smoother transition to the Hamming space.

\textbf{Overall Optimization.}
By integrating $\mathcal{L}_{c}$ and $\mathcal{L}_{b}$, the overall optimization objective for the batch is defined as:
\begin{equation}
\mathcal{L} = \frac{1}{N} \sum_{i=1}^{N} \Big( \mathcal{L}_{c}^{i} + \alpha\mathcal{L}_{b}^{i} \Big),
\label{eq:overall}
\end{equation}
where $\alpha$ is a hyperparameter. The APKH optimization process can be summarized in Algorithm 1. The parameters of the unified hashing network $H(\cdot)$, along with the learnable context prompts in CAKM and the adaptive distribution parameters in KSCA, are optimized iteratively. For each iteration, the network and all learnable modules are updated through backpropagation by minimizing the overall kernel-smoothed contrastive loss $\mathcal{L}$ according to Eq.~\eqref{eq:overall}. This iterative optimization process continues until convergence, establishing a robust and modality-invariant Hamming space.

\begin{algorithm}[tb]
\caption{APKH Optimization}
\label{alg:apkh}
\textbf{Input}: Image-text features dataset $\mathcal{D}=\{\mathbf{x}_i, \mathbf{y}_i\}_{i=1}^n$, hash code length $B$, hyperparameter $\alpha$, and attributes $\{a_k\}_{k=1}^K$. \\
\textbf{Output}: Unified hashing network $H(\cdot)$ and learnable context prompts.

\begin{algorithmic}[1] 
\STATE Initialize learnable prompts $P_k$ for each attribute $a_k$.
\WHILE{not converged}
\FOR{each batch of image-text pairs}

\STATE Generate modality-invariant semantic kernels $\mathbf{c}_k$ using the frozen text encoder.
\STATE Compute Hyperspherical RBF responses and concatenate them to form kernel response vectors $\mathbf{v}_i^x, \mathbf{v}_i^y \in \mathbb{R}^K$.

\STATE Project the kernel responses into a continuous Hamming space via the unified MLP $H(\cdot)$ to obtain relaxed hash codes $\mathbf{h}_i^x, \mathbf{h}_i^y \in \mathbb{R}^B$.

\STATE Construct localized continuous probability distributions $\mathcal{P}_i$ over the cross-modal manifold to smooth the modality gap.
\STATE Compute the overall objective loss $\mathcal{L}$ (including the expectation over $\mathcal{P}_i$ and binarization error) according to Eq.~\eqref{eq:overall}.

\STATE Update the weights of the unified hashing network $H(\cdot)$ and the learnable prompts by minimizing $\mathcal{L}$ through backpropagation.

\ENDFOR
\ENDWHILE
\end{algorithmic}
\end{algorithm}

\section{Experiments}
\subsection{Implementation Details}
To evaluate our method, we utilize four widely-used benchmark datasets: MIR Flickr \cite{huiskes2008mir}, NUS-WIDE \cite{chua2009nus}, Pascal Sentence \cite{rashtchian2010collecting}, and Wikipedia \cite{rasiwasia2010new}. The specific configurations and splits for each dataset are detailed below:

\textbf{MIR Flickr (MIR)} comprises 25,000 images with associated text tags from Flickr. We filter this dataset by removing images with fewer than 20 tags, consistent with \cite{jiang2017deep}. This results in 20,015 valid image-text pairs. From this refined set, we randomly sample 2,000 pairs for the query set and use the remaining 18,015 for the retrieval set.

\textbf{NUS-WIDE (NUS)} contains 269,648 photos spanning 81 categories. Following \cite{Feng}, we restrict our focus to the top 10 most frequent categories to create a subset of 10,000 image-text pairs. We then randomly assign 2,000 pairs to the query set and the remaining 8,000 to the retrieval set.

\textbf{Pascal Sentence (PAS)} includes 1,000 image-text pairs distributed across 20 semantic classes, with each image described by five distinct sentences. Following \cite{zhen2019deep}, we partition the data into a query set of 200 pairs and a retrieval set of 800 pairs.

\textbf{Wikipedia (WIKI)} consists of 2,866 image-text pairs categorized into 10 semantic classes extracted from Wikipedia articles. Each image is described by a short, multi-sentence paragraph. In line with \cite{zhen2019deep}, we split this collection into a query set of 462 pairs and a retrieval set of 2,173 pairs.

For all datasets, the retrieval sets also serve as training sets, where training samples are randomly drawn from the retrieval sets with a fixed random seed of 1.
Following the convention \cite{hu2022UCCH,yu2021deep}, mean average precision (mAP) is adopted as the main criterion to evaluate the cross-modal retrieval performance. The mAP score is determined by computing the mean value of the average precision scores of all queries. To evaluate the model's retrieval performance and zero-shot generalization to unseen categories, we adopt a strict seen-to-unseen protocol. Specifically, the categories of each dataset are split equally into seen and unseen classes. During training, the model is exclusively optimized on samples whose labels strictly belong to the seen classes. During inference, both the query and retrieval gallery sets are constructed exclusively from either the seen or unseen classes.

We compare our method against seven state-of-the-art CMH methods, CIRH~\cite{zhu2022work}, CAGAN~\cite{li2023clip}, UCCH~\cite{hu2022UCCH}, CDTH~\cite{10463060}, DUHEG~\cite{pmlr-v267-song25h}, PIC-CMH~\cite{11341891} and GNAH~\cite{11084616}. Among these methods, CIRH, CAGAN, UCCH, and CDTH are unsupervised CMH methods. DUHEG is an unsupervised image hashing method that extracts external attribute features via clustering. We made minimal modifications to adapt it for cross-modal retrieval. PIC-CMH integrates learnable prompts directly into the modality-specific hash functions, it is uniquely provided with class labels during training, given its inherent supervised architecture. GNAH is our preliminary method of APKH.
To ensure a fair comparison, all methods use the frozen image and text encoders from CLIP ViT-B/16 as feature extractors. The APKH model is trained using the Adam~\cite{kingma2014adam} optimizer for 500 epochs. The learning rate adopted for training is 0.0001.
Following Visual Attributes in the Wild (VAW)~\cite{9578060}, we adopt a set of 620 unique attributes to represent universal semantic concepts (e.g., color, dimensions, materials, shapes, and textures), facilitating robust modality alignment.

\begin{table*}[t]
\renewcommand{\arraystretch}{1.1} 
\centering
\caption{Experimental results across different bit levels with 40 training pairs. The results are averaged over image-to-text and text-to-image retrieval. \textbf{Bold} denotes the best result, \underline{underline} denotes the second-best result.}
\label{tab:bit}
\footnotesize
\begin{tabular*}{0.99\textwidth}{@{\extracolsep{\fill}} cl|ccccc|ccccc|ccccc @{}}
\toprule
\multirow{2}{*}{\textbf{Bit}} & \multirow{2}{*}{\textbf{Method}} & \multicolumn{5}{c|}{\textbf{Seen}} & \multicolumn{5}{c|}{\textbf{Unseen}} & \multicolumn{5}{c}{\textbf{Average}} \\
\cmidrule{3-7} \cmidrule{8-12} \cmidrule{13-17}
& & MIR & NUS & PAS & WIKI & \textbf{AVG} & MIR & NUS & PAS & WIKI & \textbf{AVG} & MIR & NUS & PAS & WIKI & \textbf{AVG} \\
\midrule
\multirow{8}{*}{\shortstack{16 \\ bit}} 
& CIRH  & 0.854 & 0.362 & 0.304 & 0.352 & 0.468 & \textbf{0.717} & 0.224 & 0.243 & 0.275 & 0.365 & 0.786 & 0.293 & 0.273 & 0.313 & 0.416 \\
& CAGAN & 0.829 & \underline{0.485} & 0.397 & 0.248 & 0.490 & 0.711 & 0.269 & 0.278 & 0.234 & 0.373 & 0.770 & \underline{0.377} & 0.338 & 0.241 & 0.431 \\
& UCCH  & 0.841 & 0.323 & 0.371 & 0.331 & 0.466 & 0.714 & 0.238 & 0.220 & 0.302 & 0.368 & 0.778 & 0.280 & 0.296 & 0.316 & 0.417 \\
& CDTH  & 0.853 & 0.260 & 0.173 & 0.252 & 0.385 & 0.708 & 0.219 & 0.124 & 0.235 & 0.321 & 0.780 & 0.240 & 0.148 & 0.243 & 0.353 \\
& DUHEG & 0.823 & 0.280 & 0.302 & 0.247 & 0.413 & \textbf{0.717} & 0.221 & 0.178 & 0.257 & 0.343 & 0.770 & 0.250 & 0.240 & 0.252 & 0.378 \\
& PIC-CMH & 0.855 & 0.391 & 0.141 & \textbf{0.389} & 0.444 & 0.705 & 0.237 & 0.133 & 0.243 & 0.330 & 0.780 & 0.314 & 0.137 & 0.316 & 0.387 \\
\cmidrule{2-17} 
& GNAH  & \underline{0.860} & 0.452 & \underline{0.472} & 0.383 & \underline{0.542} & 0.708 & \underline{0.272} & \underline{0.280} & \underline{0.272} & \underline{0.383} & \underline{0.784} & 0.362 & \underline{0.376} & \underline{0.328} & \underline{0.462} \\
& APKH  & \textbf{0.869} & \textbf{0.503} & \textbf{0.509} & \underline{0.387} & \textbf{0.567} & 0.712 & \textbf{0.286} & \textbf{0.296} & \textbf{0.311} & \textbf{0.401} & \textbf{0.791} & \textbf{0.394} & \textbf{0.402} & \textbf{0.349} & \textbf{0.484} \\
\midrule
\multirow{7}{*}{\shortstack{32 \\ bit}} 
& CIRH  & 0.864 & 0.439 & 0.405 & 0.375 & 0.521 & 0.717 & 0.241 & \underline{0.271} & 0.275 & 0.376 & 0.790 & 0.340 & 0.338 & 0.325 & 0.448 \\
& CAGAN & 0.854 & \underline{0.472} & 0.430 & \underline{0.410} & 0.541 & \textbf{0.721} & 0.254 & 0.255 & 0.273 & 0.376 & 0.787 & 0.363 & 0.342 & 0.342 & 0.459 \\
& UCCH  & 0.845 & 0.375 & 0.379 & 0.332 & 0.483 & 0.709 & 0.235 & 0.212 & \underline{0.297} & 0.363 & 0.777 & 0.305 & 0.296 & 0.314 & 0.423 \\
& CDTH  & 0.840 & 0.258 & 0.178 & 0.258 & 0.383 & 0.706 & 0.210 & 0.132 & 0.235 & 0.321 & 0.773 & 0.234 & 0.155 & 0.246 & 0.352 \\
& DUHEG & 0.839 & 0.283 & 0.267 & 0.316 & 0.426 & 0.702 & 0.221 & 0.156 & 0.257 & 0.334 & 0.771 & 0.252 & 0.212 & 0.287 & 0.380 \\
& PIC-CMH  & 0.859 & 0.396 & 0.173 & 0.381 & 0.452 & 0.708 & 0.227 & 0.157 & 0.240 & 0.333 & 0.784 & 0.311 & 0.165 & 0.311 & 0.393 \\
\cmidrule{2-17} 
& GNAH  & \underline{0.870} & 0.465 & \underline{0.499} & \textbf{0.411} & \underline{0.561} & 0.716 & \underline{0.285} & 0.262 & 0.279 & \underline{0.385} & \underline{0.793} & \underline{0.375} & \underline{0.380} & \underline{0.345} & \underline{0.473} \\
& APKH  & \textbf{0.876} & \textbf{0.513} & \textbf{0.568} & 0.403 & \textbf{0.590} & \underline{0.719} & \textbf{0.298} & \textbf{0.292} & \textbf{0.351} & \textbf{0.415} & \textbf{0.798} & \textbf{0.405} & \textbf{0.430} & \textbf{0.377} & \textbf{0.503} \\
\midrule
\multirow{7}{*}{\shortstack{64 \\ bit}} 
& CIRH  & 0.855 & 0.432 & 0.465 & \underline{0.421} & 0.543 & 0.709 & 0.250 & \underline{0.307} & 0.285 & \underline{0.388} & 0.782 & 0.341 & \underline{0.386} & 0.353 & 0.466 \\
& CAGAN & 0.852 & 0.455 & 0.440 & 0.412 & 0.540 & 0.715 & 0.252 & 0.271 & 0.270 & 0.377 & 0.784 & 0.354 & 0.356 & 0.341 & 0.459 \\
& UCCH  & 0.840 & 0.415 & 0.457 & 0.361 & 0.518 & 0.704 & 0.260 & 0.270 & \underline{0.286} & 0.380 & 0.772 & 0.338 & 0.363 & 0.323 & 0.449 \\
& CDTH  & 0.854 & 0.298 & 0.269 & 0.294 & 0.429 & \underline{0.717} & 0.215 & 0.148 & 0.237 & 0.329 & 0.786 & 0.257 & 0.208 & 0.266 & 0.379 \\
& DUHEG & 0.842 & 0.346 & 0.322 & 0.291 & 0.450 & 0.709 & 0.222 & 0.181 & 0.250 & 0.340 & 0.776 & 0.284 & 0.251 & 0.270 & 0.395 \\
& PIC-CMH  & 0.863 & 0.419 & 0.203 & 0.399 & 0.471 & 0.708 & 0.236 & 0.156 & 0.245 & 0.336 & 0.786 & 0.328 & 0.180 & 0.322 & 0.404 \\
\cmidrule{2-17} 
& GNAH  & \underline{0.867} & \underline{0.518} & \underline{0.526} & \textbf{0.431} & \underline{0.585} & 0.714 & \underline{0.284} & 0.264 & 0.280 & 0.386 & \underline{0.791} & \underline{0.401} & \underline{0.395} & \underline{0.355} & \underline{0.486} \\
& APKH  & \textbf{0.874} & \textbf{0.525} & \textbf{0.592} & 0.405 & \textbf{0.599} & \textbf{0.719} & \textbf{0.301} & \textbf{0.335} & \textbf{0.346} & \textbf{0.425} & \textbf{0.797} & \textbf{0.413} & \textbf{0.463} & \textbf{0.376} & \textbf{0.512} \\
\midrule
\multirow{7}{*}{\shortstack{128 \\ bit}} 
& CIRH  & 0.861 & 0.482 & 0.476 & 0.403 & 0.555 & \underline{0.721} & 0.271 & \underline{0.315} & 0.278 & 0.396 & 0.791 & 0.376 & 0.395 & 0.340 & 0.476 \\
& CAGAN & 0.859 & 0.462 & 0.449 & 0.406 & 0.544 & 0.703 & 0.265 & 0.265 & 0.283 & 0.379 & 0.781 & 0.363 & 0.357 & 0.344 & 0.462 \\
& UCCH  & 0.845 & 0.438 & 0.457 & 0.371 & 0.528 & 0.712 & 0.252 & 0.243 & \underline{0.305} & 0.378 & 0.779 & 0.345 & 0.350 & 0.338 & 0.453 \\
& CDTH  & 0.850 & 0.383 & 0.414 & 0.344 & 0.498 & 0.705 & 0.221 & 0.164 & 0.249 & 0.335 & 0.777 & 0.302 & 0.289 & 0.296 & 0.416 \\
& DUHEG & 0.844 & 0.335 & 0.330 & 0.335 & 0.461 & 0.720 & 0.228 & 0.219 & 0.274 & 0.360 & 0.782 & 0.282 & 0.274 & 0.304 & 0.410 \\
& PIC-CMH & \textbf{0.876} & 0.500 & 0.219 & \underline{0.431} & 0.506 & 0.711 & 0.230 & 0.169 & 0.273 & 0.346 & \underline{0.793} & 0.365 & 0.194 & 0.352 & 0.426 \\
\cmidrule{2-17} 
& GNAH  & 0.864 & \textbf{0.557} & \underline{0.550} & 0.428 & \underline{0.600} & 0.712 & \underline{0.302} & 0.292 & 0.291 & \underline{0.399} & 0.788 & \underline{0.430} & \underline{0.421} & \underline{0.360} & \underline{0.500} \\
& APKH  & \textbf{0.876} & \underline{0.553} & \textbf{0.602} & \textbf{0.433} & \textbf{0.616} & \textbf{0.730} & \textbf{0.316} & \textbf{0.337} & \textbf{0.359} & \textbf{0.436} & \textbf{0.803} & \textbf{0.434} & \textbf{0.469} & \textbf{0.396} & \textbf{0.526} \\
\bottomrule
\end{tabular*}
\end{table*}

\subsection{Generalization from Seen to Unseen Semantics}
\textbf{Performance across Different Bit Lengths.}
To evaluate the impact of different hash code lengths on cross-modal retrieval performance, we present the mAP results across 16, 32, 64, and 128 bits under the heavily data-constrained scenario of 40 training pairs in Table~\ref{tab:bit}. The results demonstrate that our proposed APKH framework consistently and significantly outperforms all compared state-of-the-art unsupervised cross-modal hashing methods, including our preliminary method GNAH, across all four benchmark datasets and at every bit length. Even in the highly compact 16-bit configuration, APKH exhibits improved robustness, achieving an overall average mAP of 0.484, which notably surpasses the second-best method, GNAH (0.462), and other strong baselines like CAGAN (0.431). As the hash code length increases, the retrieval performance of the evaluated methods generally improves, reflecting the enhanced capacity of longer binary codes to encapsulate complex semantic information. At the length of 128 bits, APKH further widens the performance gap, yielding the highest overall average mAP of 0.526, compared to 0.500 for GNAH and 0.476 for CIRH.
Notably, the performance advantage of APKH over the baseline methods is significantly more pronounced on the unseen categories than on the seen ones. For instance, at 64 bits, APKH outperforms the second-best method (GNAH) by a margin of 0.014 in the seen average mAP (0.599 versus 0.585). However, in the unseen evaluation at the same bit length, APKH achieves a much larger lead of 0.037 over the top competing baseline (CIRH, which scores 0.388 compared to APKH's 0.425). This trend is consistent at 128 bits, where APKH's improvement margin over the second-best method on unseen data (0.037 over GNAH) is more than double its margin on seen data (0.016 over GNAH). This amplified performance gap in zero-shot scenarios underscores the core advantage of the proposed framework: rather than merely memorizing the limited training pairs, APKH effectively leverages modality-invariant semantic kernels and distribution smoothing to prevent overfitting, demonstrating improved generalization capabilities under severe data scarcity.

\textbf{Performance under Varying Degrees of Data Scarcity.}
To rigorously evaluate the correspondence efficiency of APKH, we analyze its retrieval performance by scaling the number of available training image-text pairs from 20 to 160 at a fixed code length of 64 bits. As illustrated in Fig.~\ref{fig:scarcity}, while all methods exhibit an expected upward trend in mAP as the supervision signals increase, APKH demonstrates a significantly higher performance floor and a more robust learning curve compared to existing state-of-the-art baselines. For the seen categories, APKH achieves competitive results even at the extreme 20 training pairs. While conventional methods like CAGAN and CIRH require a larger volume of correspondences to stabilize their modality-specific hash functions, APKH leverages the CAKM to effectively exploit the semantic priors of the frozen vision-language foundation model. This allows the model to establish a high-quality joint embedding space even when the instance-level alignment signals are sparse. The superiority of APKH is most evident in the unseen category evaluations, where it consistently maintains a substantial margin over the next-best competitors. Traditional unsupervised hashing frameworks often suffer from "manifold shattering" when training data is scarce, where the hash function overfits to the limited pairwise correlations and fails to preserve the underlying semantic structure for novel categories. In contrast, the KSCA module in APKH models discrete correspondences as continuous local distributions. 
Rather than merely memorizing isolated pairwise correlations, this distribution-driven alignment acts as a powerful statistical regularizer that preserves the intrinsic geometric integrity of the pre-trained vision-language manifold. By explicitly populating the transitional semantic regions, it prevents the discrete Hamming projection from overfitting to sparse correlations, thereby facilitating generalized retrieval performance.

\begin{figure}[]
    \centering
    \includegraphics[width=0.24\textwidth]{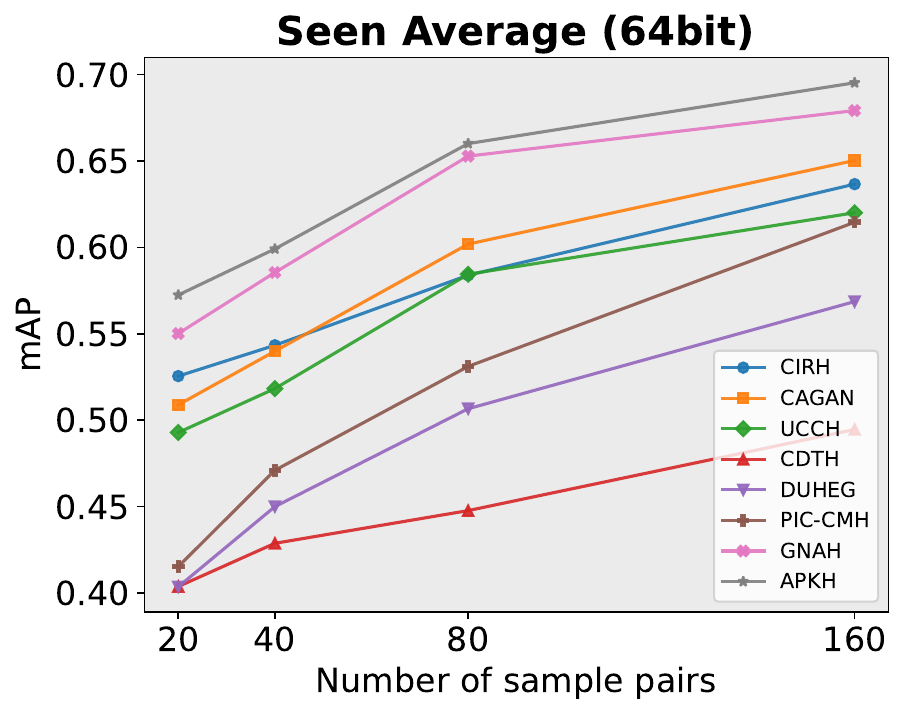}
    \includegraphics[width=0.24\textwidth]{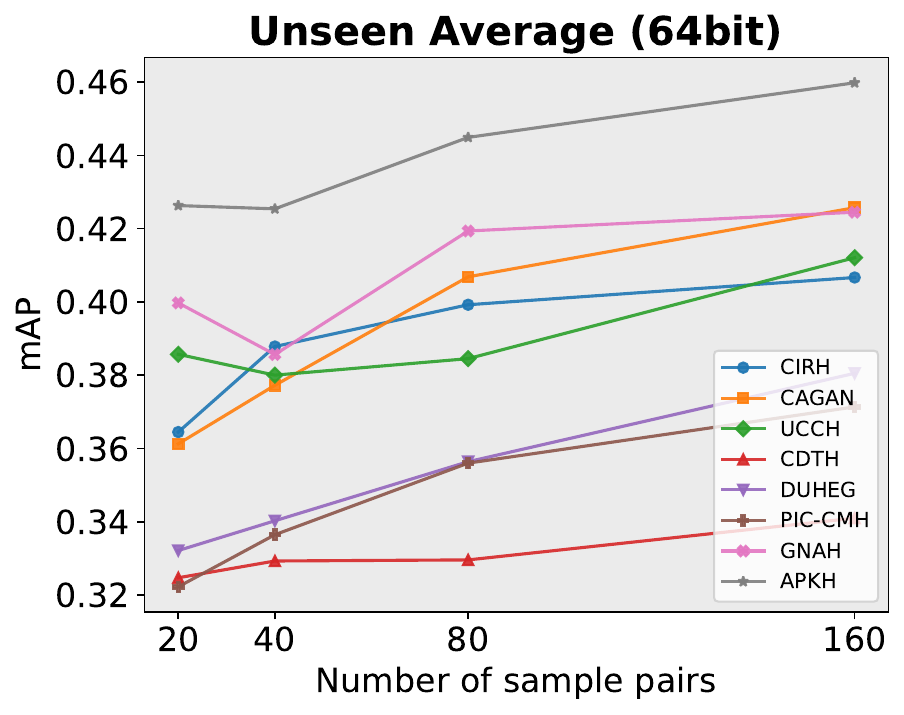}
    \includegraphics[width=0.24\textwidth]{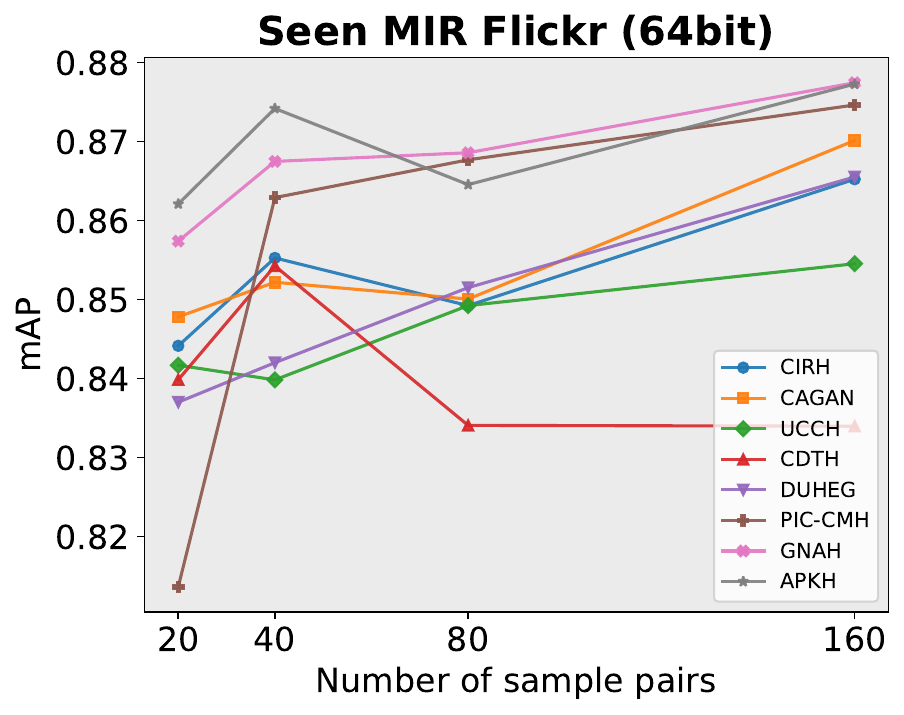}
    \includegraphics[width=0.24\textwidth]{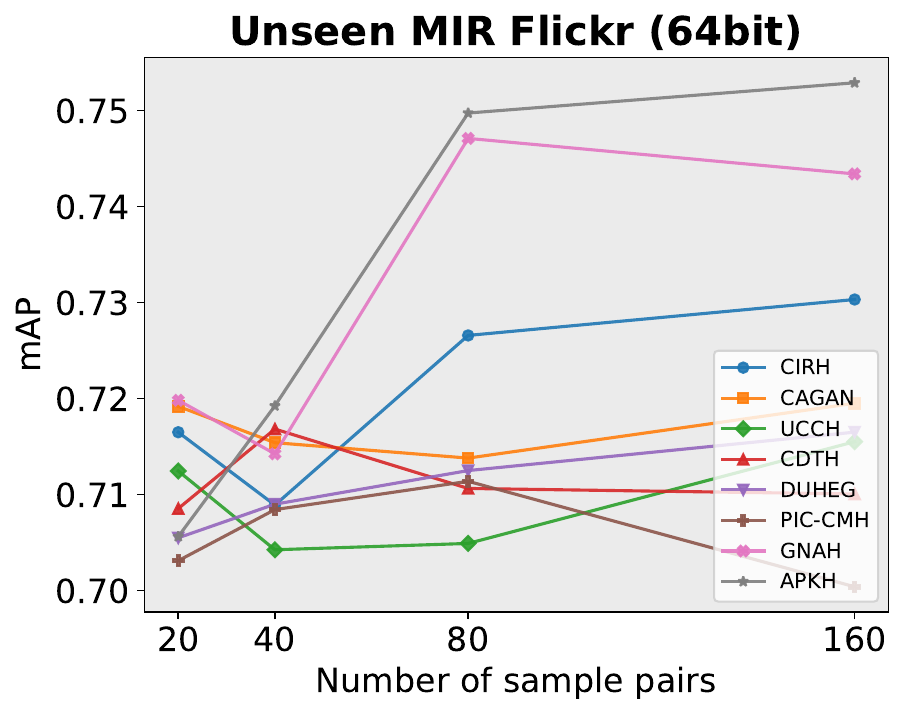}
    \includegraphics[width=0.24\textwidth]{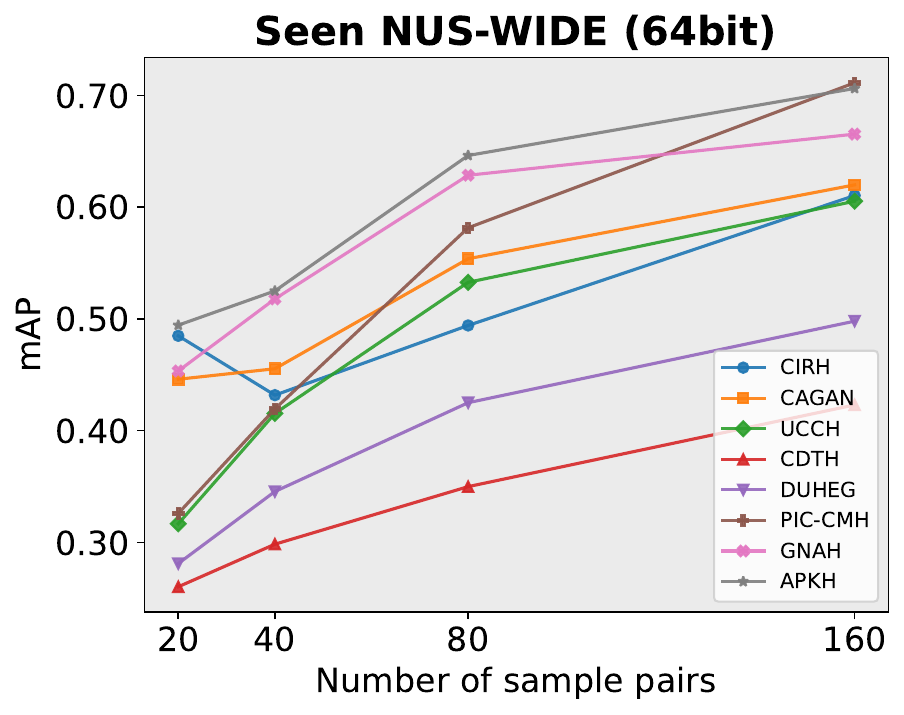}
    \includegraphics[width=0.24\textwidth]{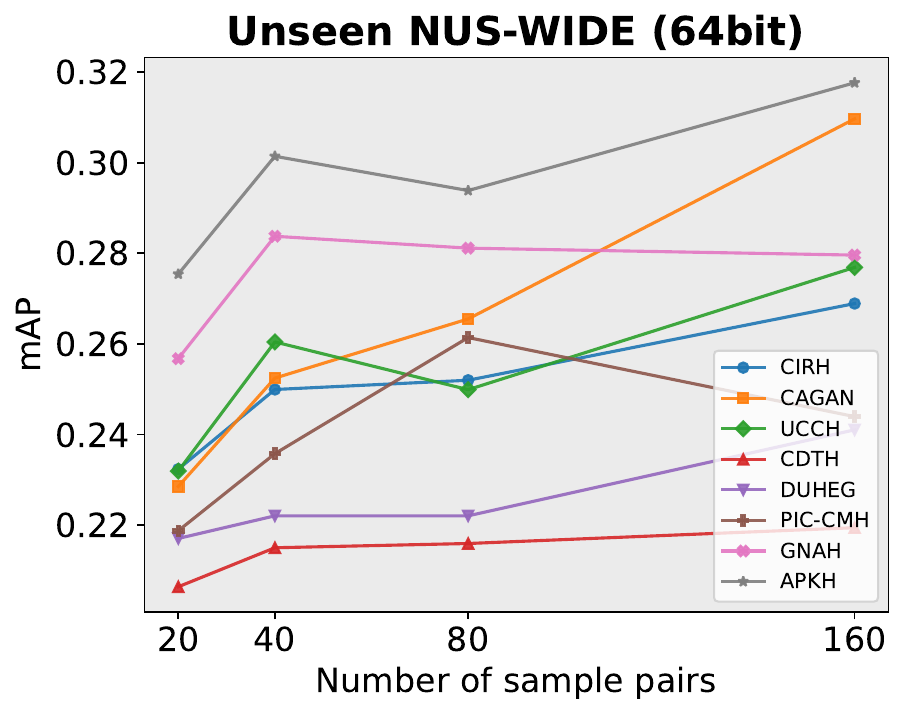}
    \includegraphics[width=0.24\textwidth]{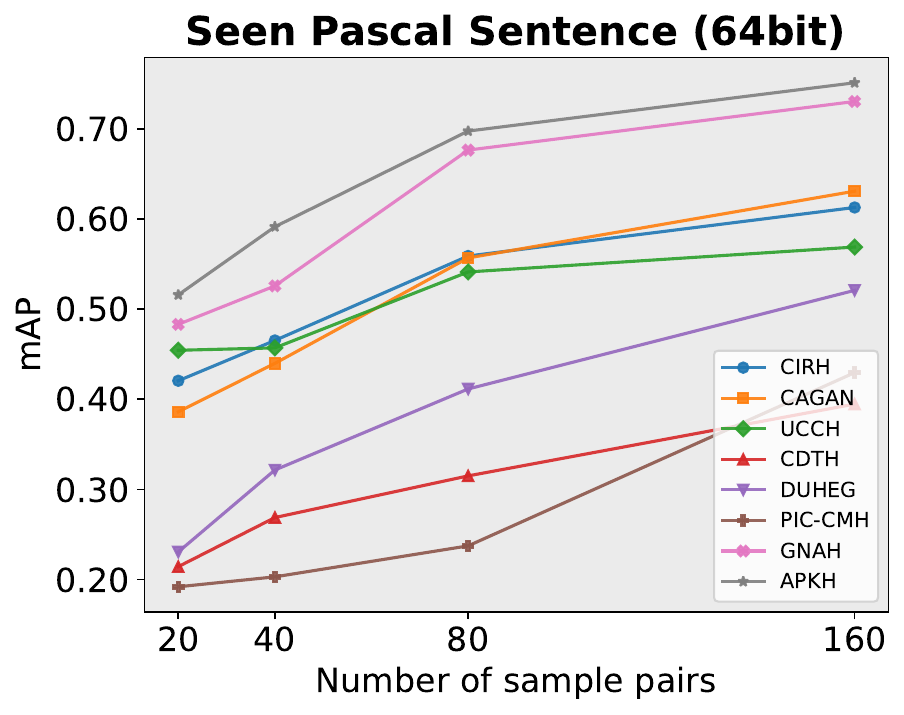}
    \includegraphics[width=0.24\textwidth]{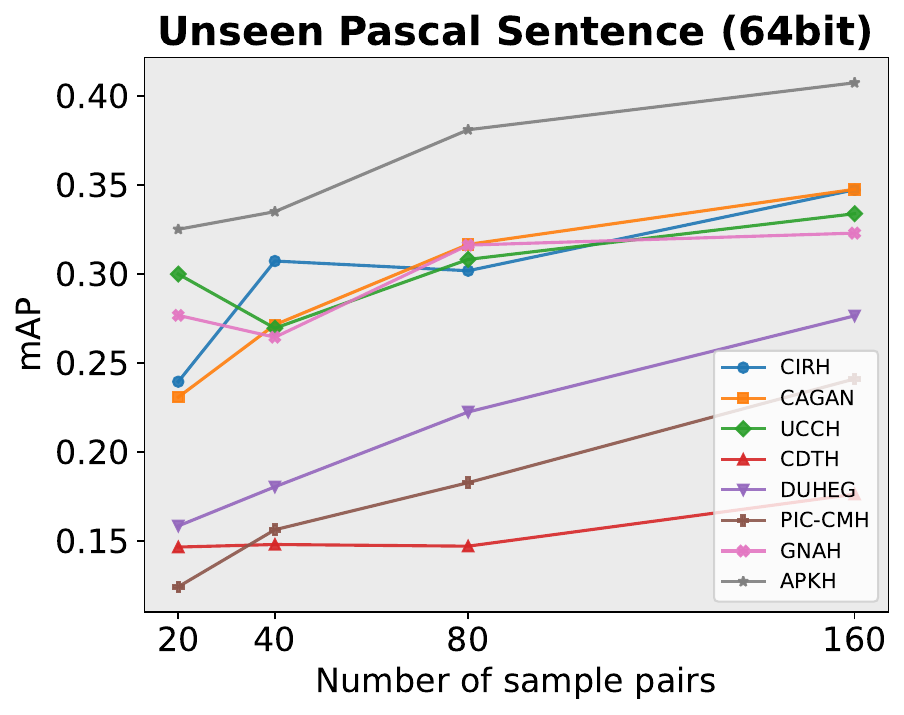}
    \includegraphics[width=0.24\textwidth]{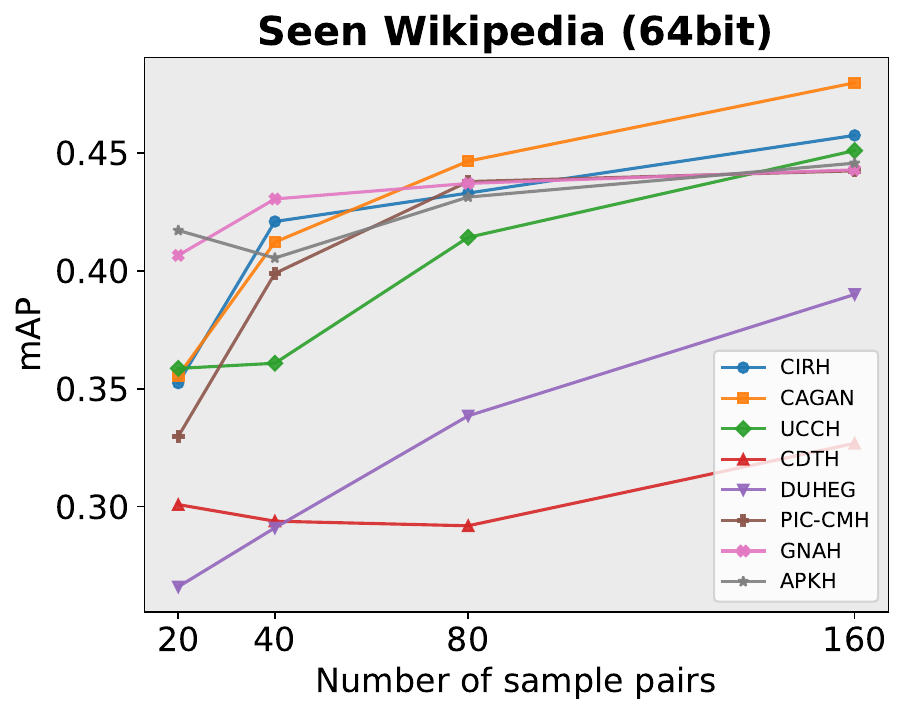}
    \includegraphics[width=0.24\textwidth]{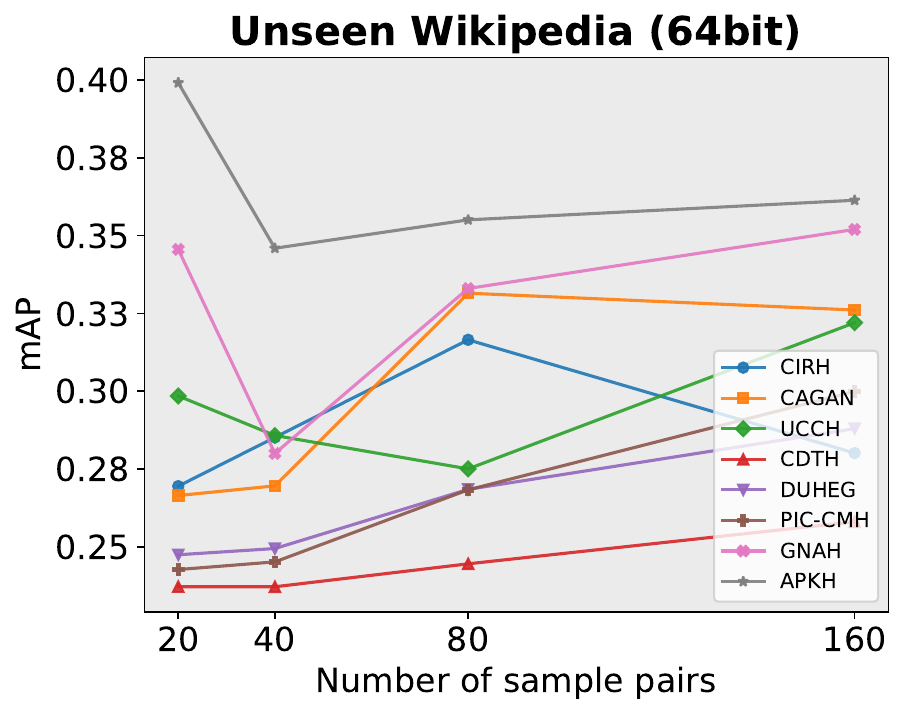} 
    \caption{Retrieval performance (mAP) of the proposed APKH and state-of-the-art baselines under varying degrees of data scarcity (20, 40, 80, and 160 training sample pairs) at 64 bits. The results are averaged over image-to-text and text-to-image retrieval.}
    \label{fig:scarcity}
\end{figure}

\begin{table}[]
  \centering
  \caption{Cross-dataset transfer retrieval performance (mAP) at 64 bits with 40 training pairs. \textbf{Bold} denotes the best result, \underline{underline} denotes the second-best result.}
  \label{tab:cross}
  \small
  \begin{tabular}{lccccc}
    \toprule
    \multirow{2}{*}{Method} & Source & \multicolumn{4}{c}{Target} \\
    \cmidrule(lr){2-2} \cmidrule(lr){3-6}
     & \textbf{MIR} & NUS & PAS & WIKI & \textbf{AVG} \\
    \midrule
    CIRH    & 0.855 & 0.285 & 0.162 & 0.256 & 0.235 \\
    CAGAN   & 0.852 & 0.285 & 0.130 & 0.249 & 0.221 \\
    UCCH    & 0.840 & 0.216 & 0.122 & 0.221 & 0.186 \\
    CDTH    & 0.854 & 0.239 & 0.119 & 0.233 & 0.197 \\
    DUHEG   & 0.842 & 0.240 & 0.153 & 0.241 & 0.211 \\
    PIC-CMH & 0.863 & 0.284 & 0.124 & 0.251 & 0.220 \\
    GNAH    & \underline{0.867} & \underline{0.376} & \textbf{0.201} & \underline{0.273} & \underline{0.283} \\
    APKH    & \textbf{0.874} & \textbf{0.406} & \underline{0.193} & \textbf{0.284} & \textbf{0.295} \\
    \midrule
     & \textbf{NUS} & MIR & PAS & WIKI & \textbf{AVG} \\
    \midrule
    CIRH    & 0.432 & 0.848 & 0.188 & 0.281 & 0.439 \\
    CAGAN   & 0.455 & 0.836 & 0.195 & 0.267 & 0.432 \\
    UCCH    & 0.415 & 0.838 & 0.185 & 0.284 & 0.436 \\
    CDTH    & 0.298 & 0.830 & 0.127 & 0.229 & 0.395 \\
    DUHEG   & 0.346 & 0.848 & 0.147 & 0.266 & 0.420 \\
    PIC-CMH & 0.419 & 0.821 & 0.129 & 0.289 & 0.413 \\
    GNAH    & \underline{0.518} & \textbf{0.866} & \underline{0.269} & \underline{0.296} & \underline{0.477} \\
    APKH    & \textbf{0.525} & \underline{0.863} & \textbf{0.271} & \textbf{0.334} & \textbf{0.489} \\
    \midrule
     & \textbf{PAS} & MIR & NUS & WIKI & \textbf{AVG} \\
    \midrule
    CIRH    & 0.465 & 0.858 & 0.247 & 0.242 & 0.449 \\
    CAGAN   & 0.440 & 0.843 & 0.248 & \underline{0.287} & 0.459 \\
    UCCH    & 0.457 & 0.806 & 0.237 & 0.259 & 0.434 \\
    CDTH    & 0.269 & 0.820 & 0.217 & 0.230 & 0.422 \\
    DUHEG   & 0.322 & 0.841 & 0.237 & 0.247 & 0.441 \\
    PIC-CMH & 0.203 & 0.825 & 0.217 & 0.251 & 0.431 \\
    GNAH    & \underline{0.526} & \underline{0.861} & \underline{0.271} & 0.286 & \underline{0.473} \\
    APKH    & \textbf{0.592} & \textbf{0.870} & \textbf{0.343} & \textbf{0.297} & \textbf{0.503} \\
    \midrule
     & \textbf{WIKI} & MIR & NUS & PAS & \textbf{AVG} \\
    \midrule
    CIRH    & \underline{0.421} & 0.832 & 0.266 & 0.173 & 0.424 \\
    CAGAN   & 0.412 & 0.835 & 0.240 & 0.163 & 0.413 \\
    UCCH    & 0.361 & 0.832 & 0.256 & 0.146 & 0.411 \\
    CDTH    & 0.294 & 0.830 & 0.224 & 0.140 & 0.398 \\
    DUHEG   & 0.291 & 0.834 & 0.243 & 0.157 & 0.411 \\
    PIC-CMH & 0.399 & 0.818 & \underline{0.273} & 0.135 & 0.409 \\
    GNAH    & \textbf{0.431} & \underline{0.842} & 0.256 & \underline{0.201} & \underline{0.433} \\
    APKH    & 0.405 & \textbf{0.847} & \textbf{0.318} & \textbf{0.224} & \textbf{0.463} \\
    \bottomrule
  \end{tabular}
\end{table}

\textbf{Precision-Recall Curve.}
To further evaluate retrieval quality, Fig.~\ref{fig:prResults} illustrates the Precision-Recall (PR) curves for both Image-to-Text (I2T) and Text-to-Image (T2I) tasks on the NUS-WIDE and Wikipedia datasets at 64 bits with 40 training pairs. On the NUS-WIDE dataset and across the unseen categories of both datasets, the proposed APKH consistently encapsulates the curves of competing baselines, achieving a significantly larger Area Under the Curve (AUC). This demonstrates that APKH maintains higher precision at equivalent recall levels, delivering more accurate top-ranked matches. It is worth noting that on the seen categories of Wikipedia, the strict statistical regularization in APKH may slightly constrain its peak fitting performance, making its curves slightly lower than some baselines. Crucially, however, the performance advantage of APKH becomes most pronounced in the unseen zero-shot scenarios. While baseline methods experience rapid precision degradation as recall increases due to overfitted, shattered manifolds, APKH exhibits a much smoother and more stable decline.

\begin{figure*}[]
    \centering
    \includegraphics[width=0.24\textwidth]{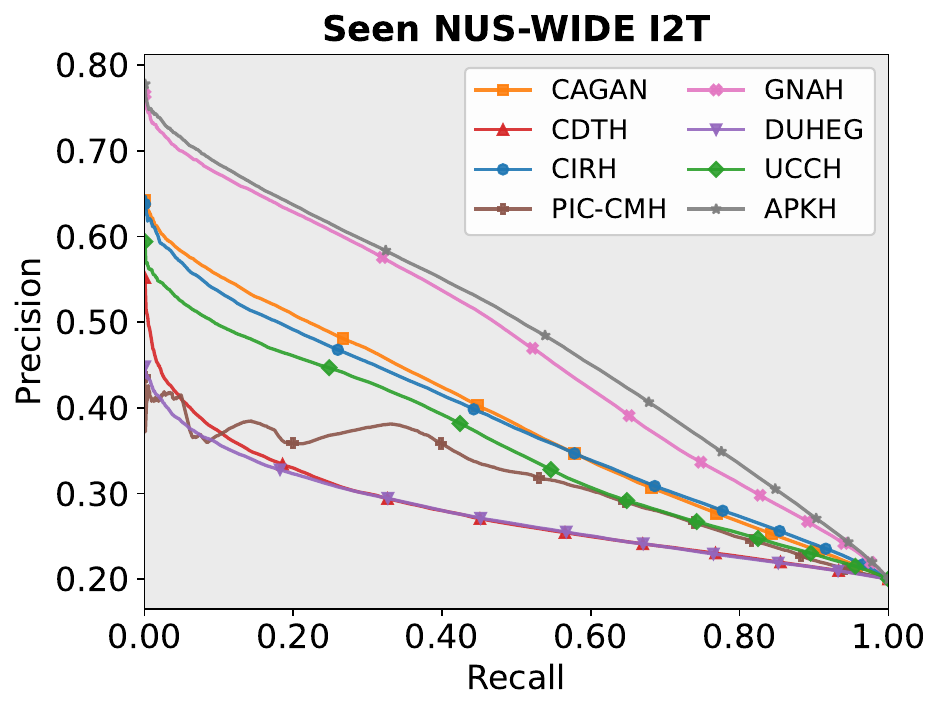}
    \includegraphics[width=0.24\textwidth]{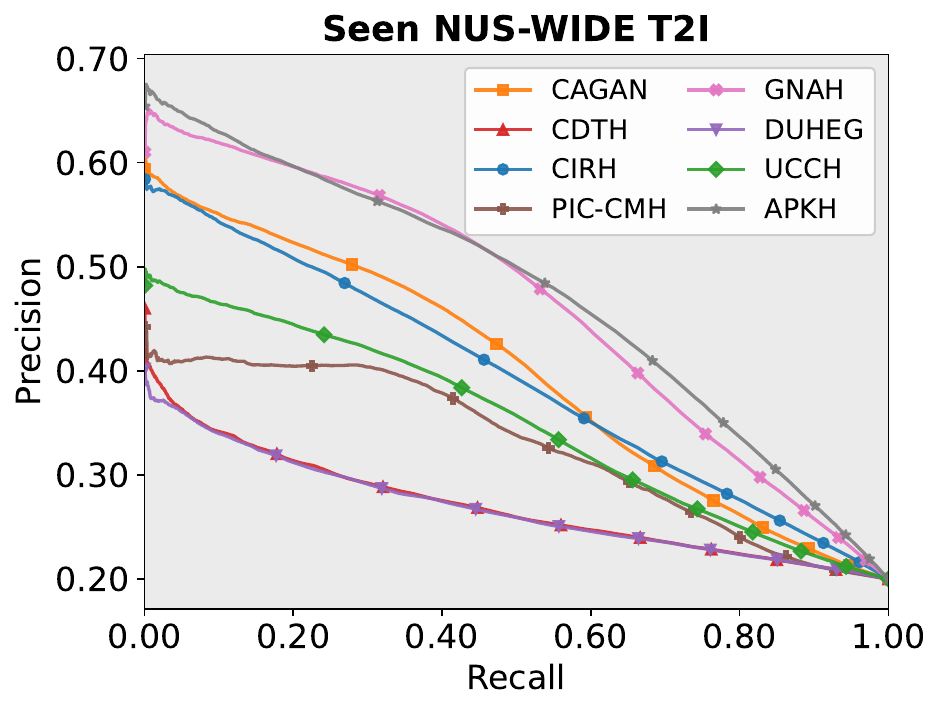}
    \includegraphics[width=0.24\textwidth]{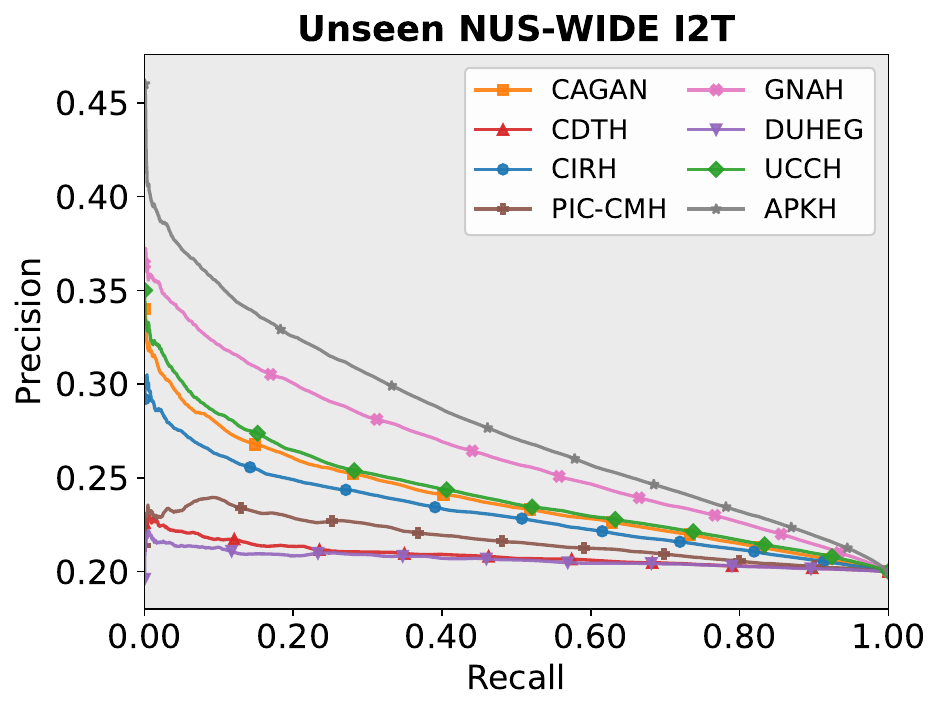}
    \includegraphics[width=0.24\textwidth]{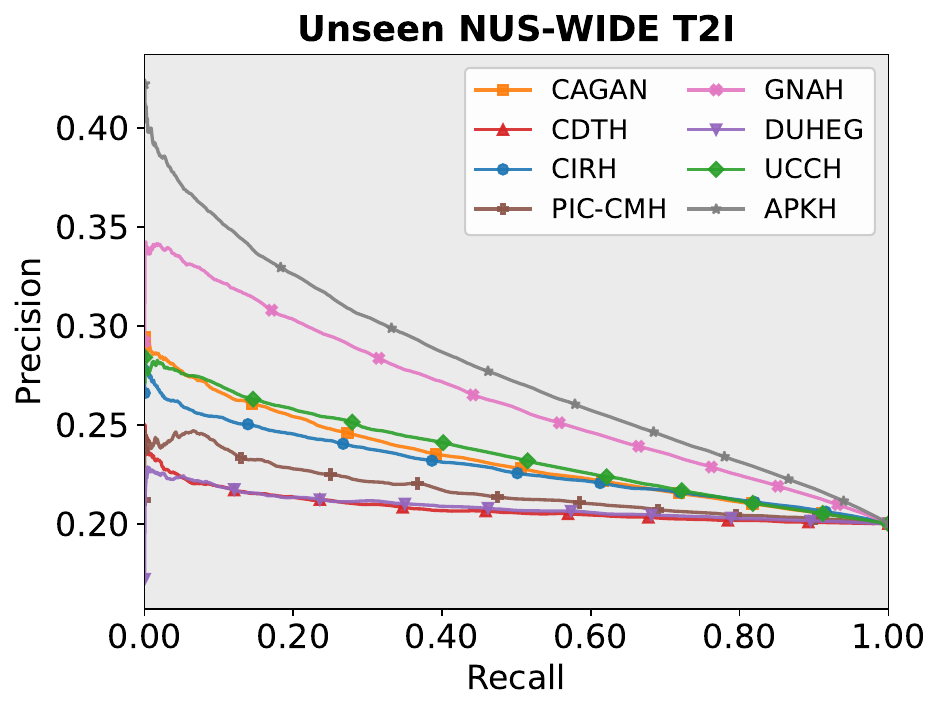}
    \includegraphics[width=0.24\textwidth]{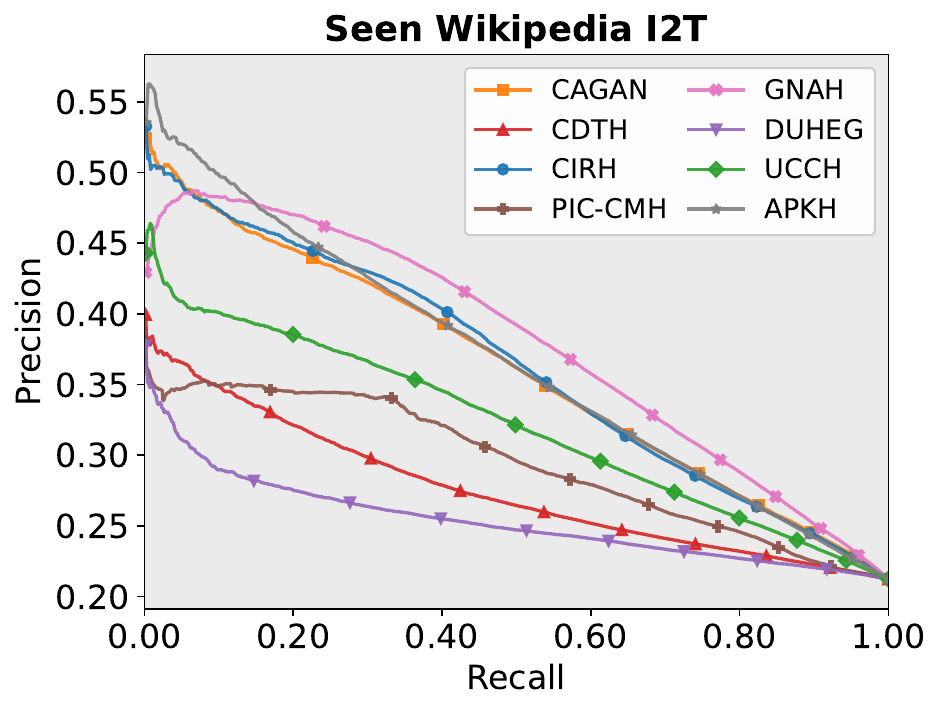}
    \includegraphics[width=0.24\textwidth]{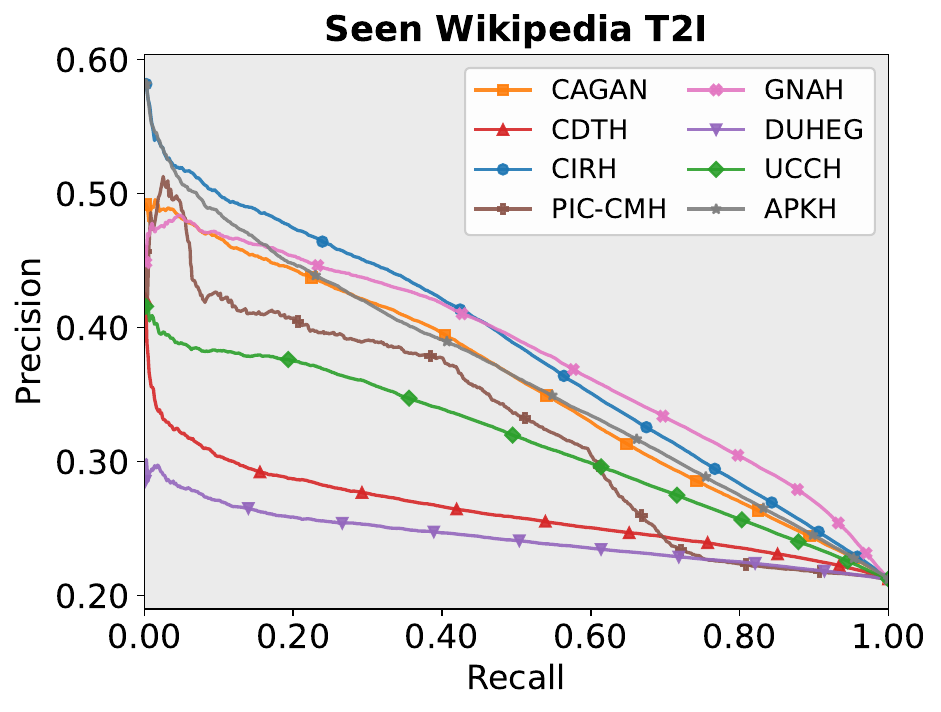}
    \includegraphics[width=0.24\textwidth]{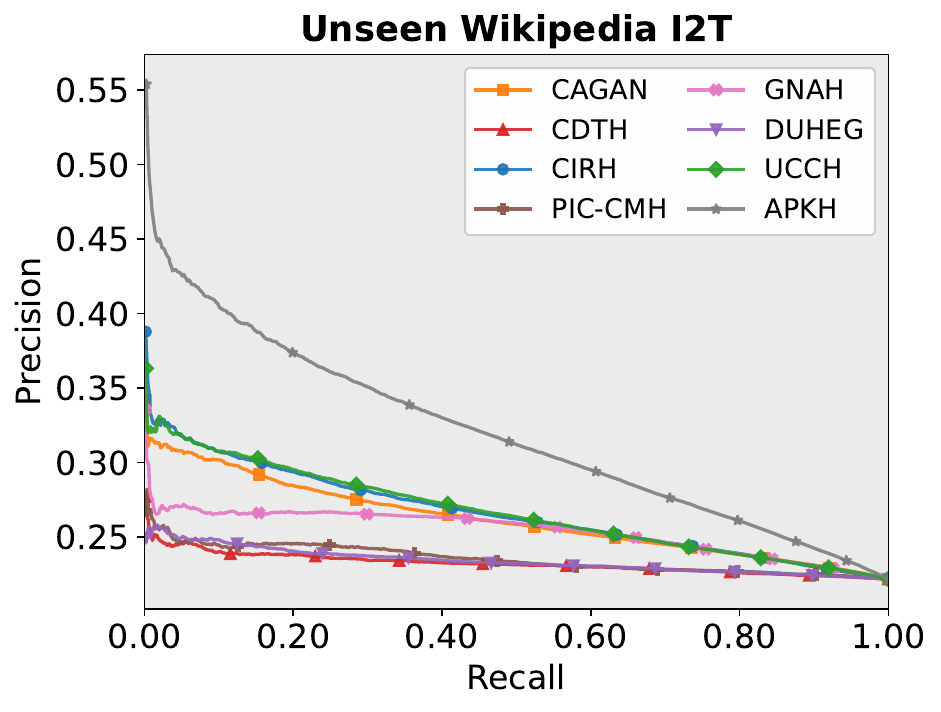}
    \includegraphics[width=0.24\textwidth]{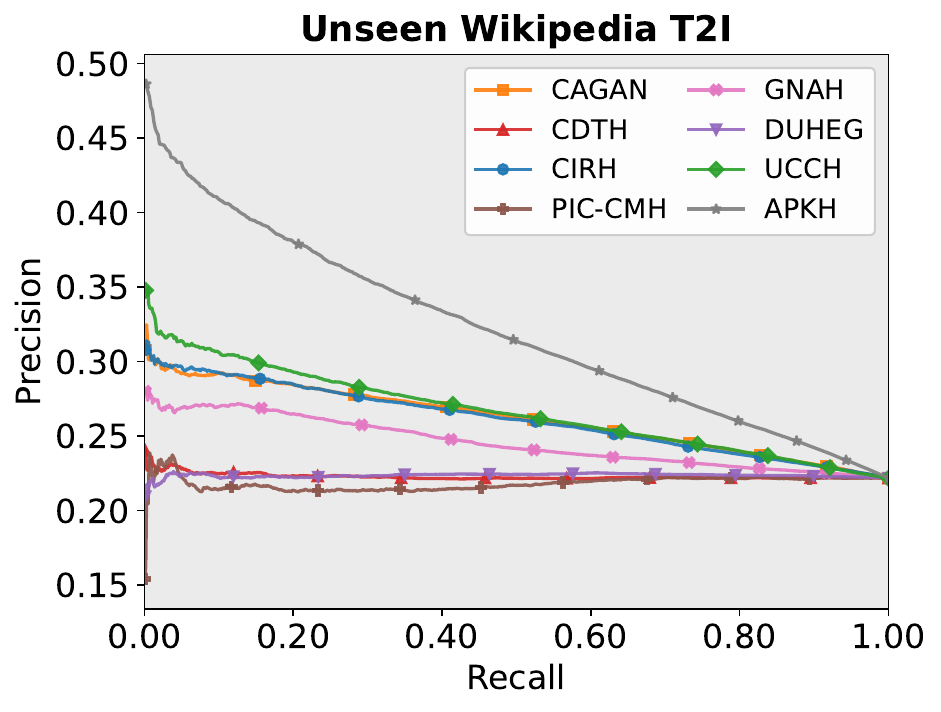}

    \caption{Precision-Recall (PR) curves for image-to-text (I2T) and text-to-image (T2I) retrieval tasks on NUS-WIDE and Wikipedia datasets. The results compare our proposed APKH with several state-of-the-art methods using 64-bit hash codes and 40 training pairs across both seen and unseen categories.}
    \label{fig:prResults}
\end{figure*}

\begin{figure*}[]
    \centering
    \includegraphics[width=0.24\textwidth]{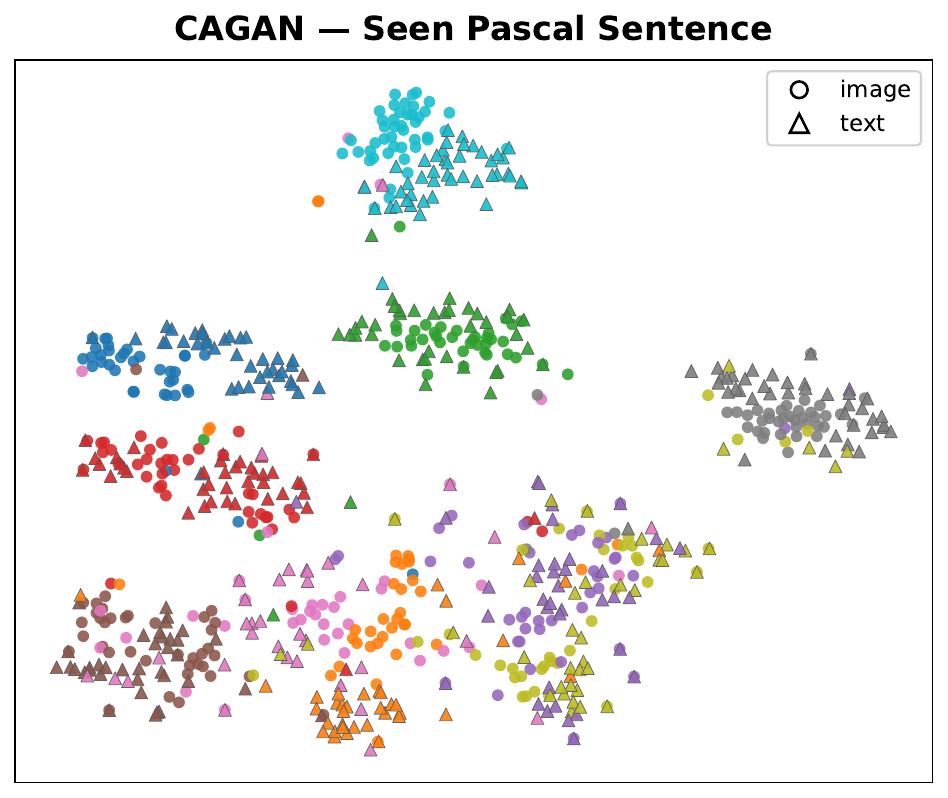}
    \includegraphics[width=0.24\textwidth]{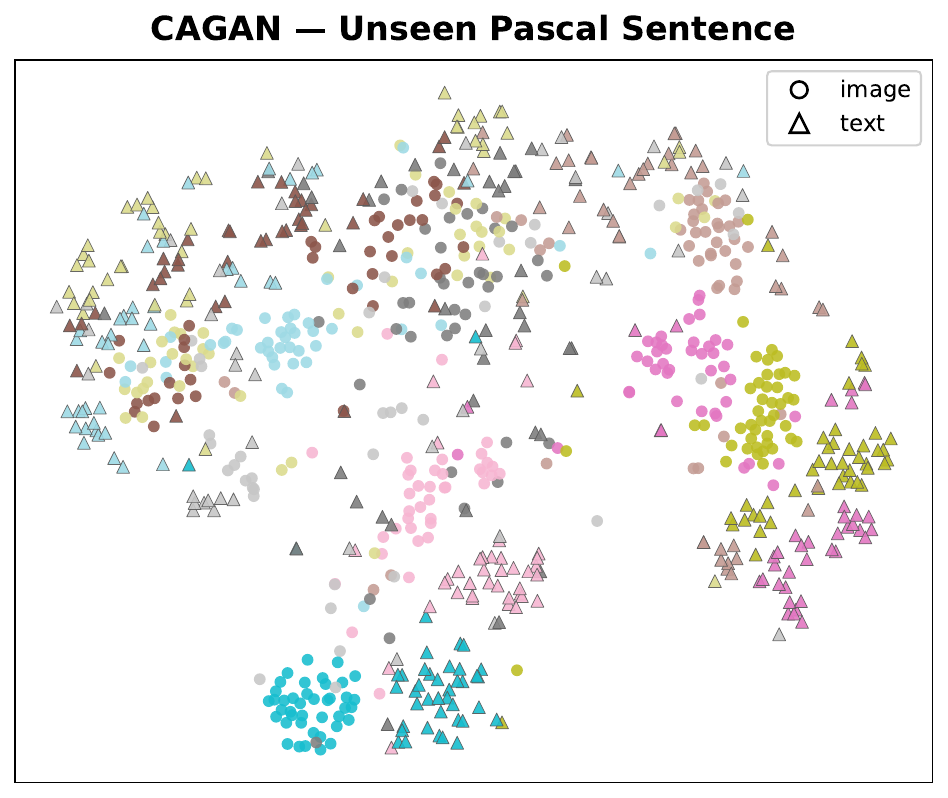}
    \includegraphics[width=0.24\textwidth]{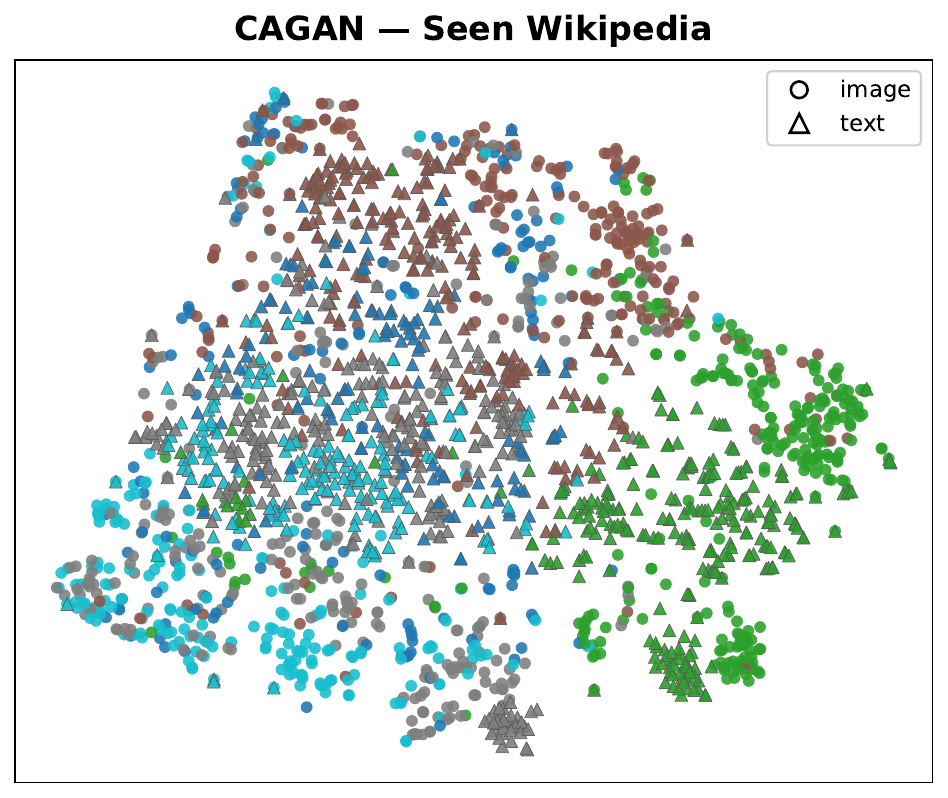}
    \includegraphics[width=0.24\textwidth]{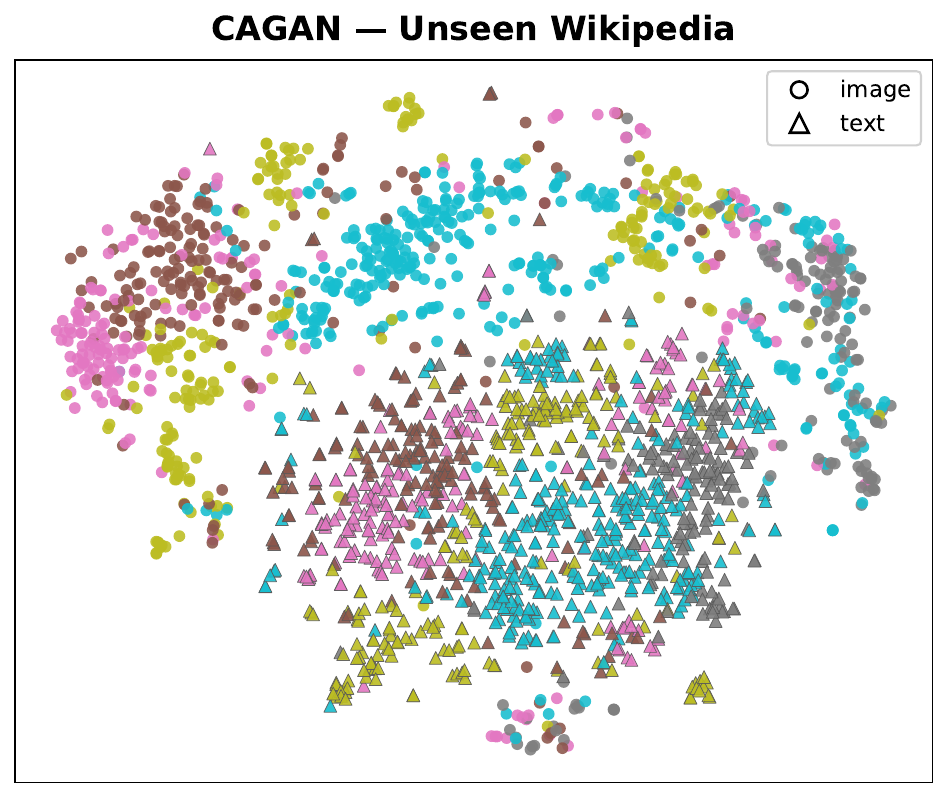}
    \includegraphics[width=0.24\textwidth]{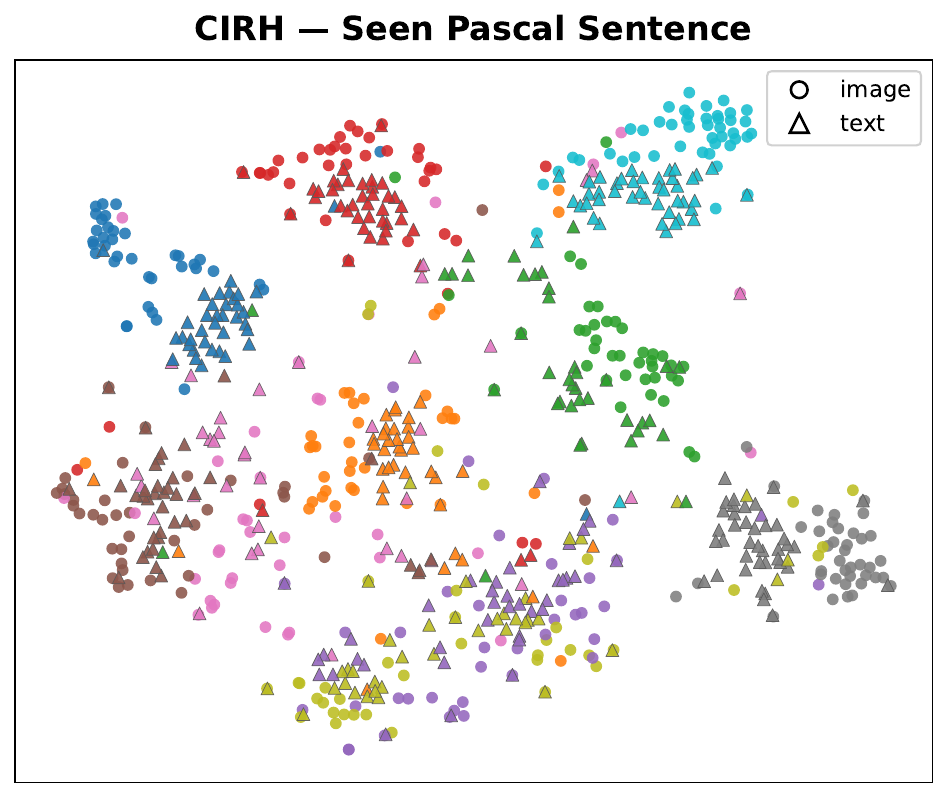}
    \includegraphics[width=0.24\textwidth]{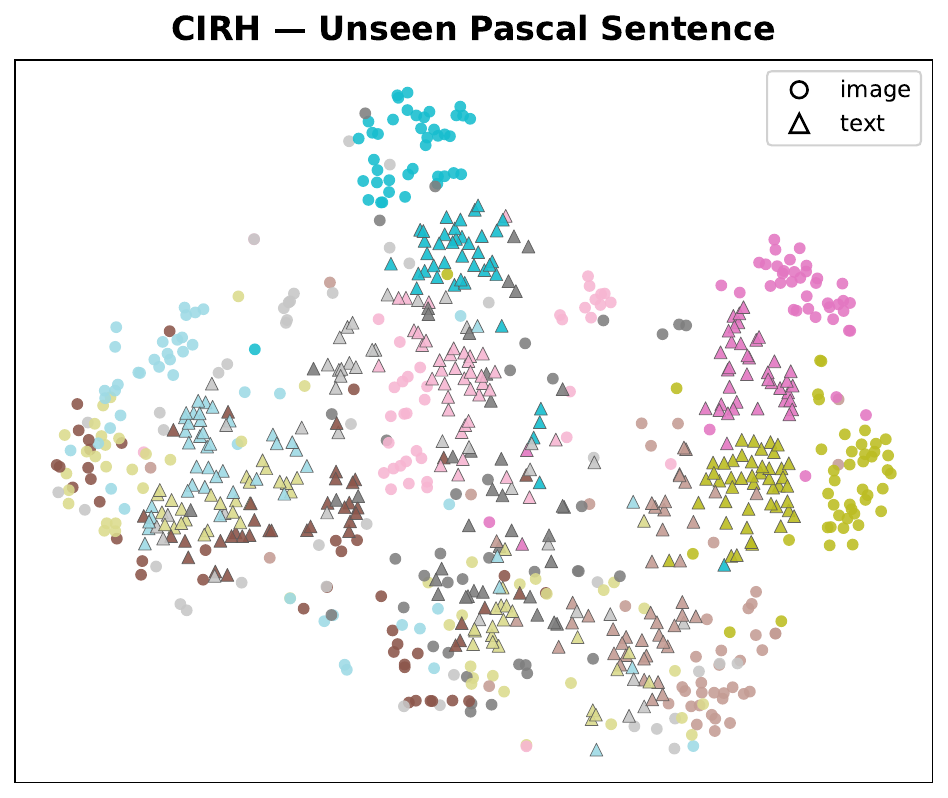}
    \includegraphics[width=0.24\textwidth]{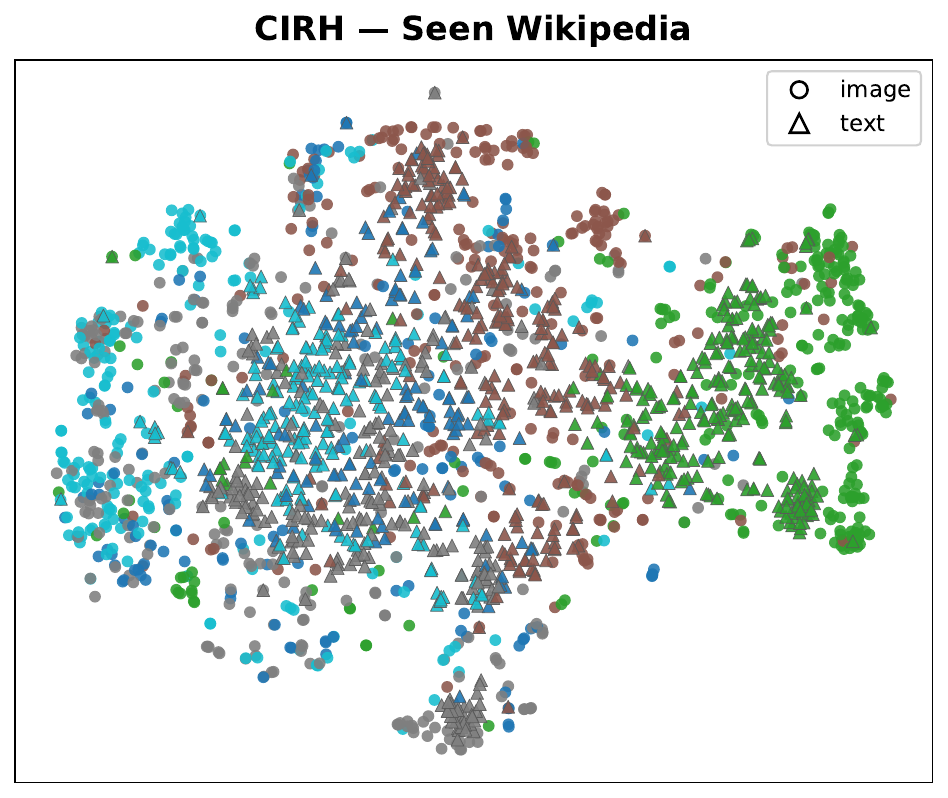}
    \includegraphics[width=0.24\textwidth]{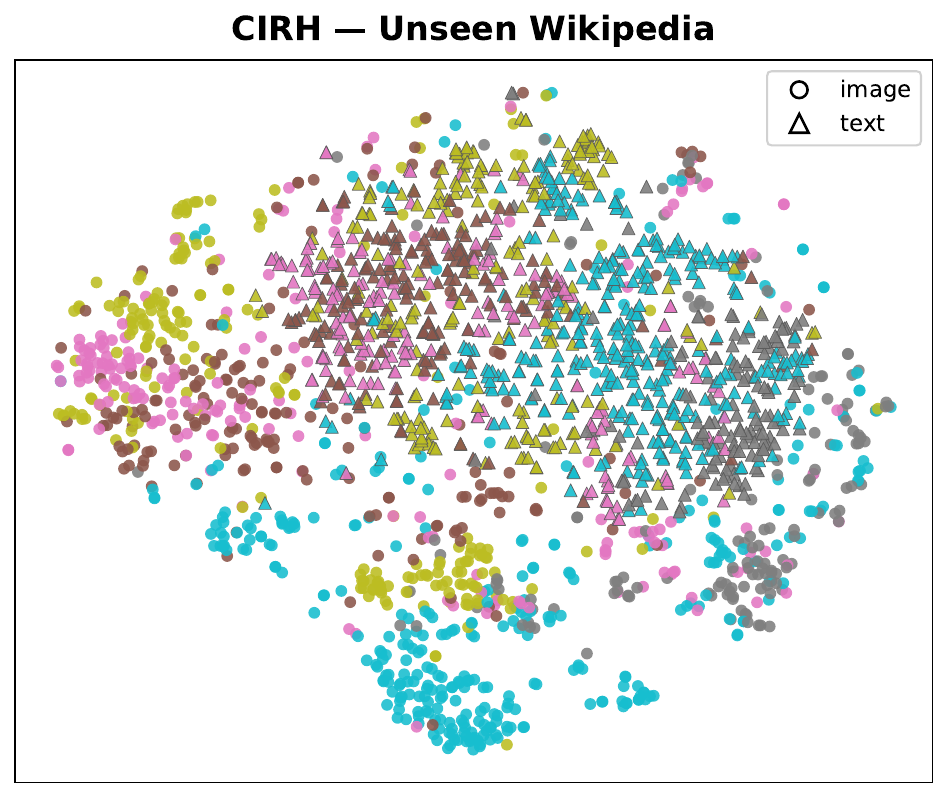}
    \includegraphics[width=0.24\textwidth]{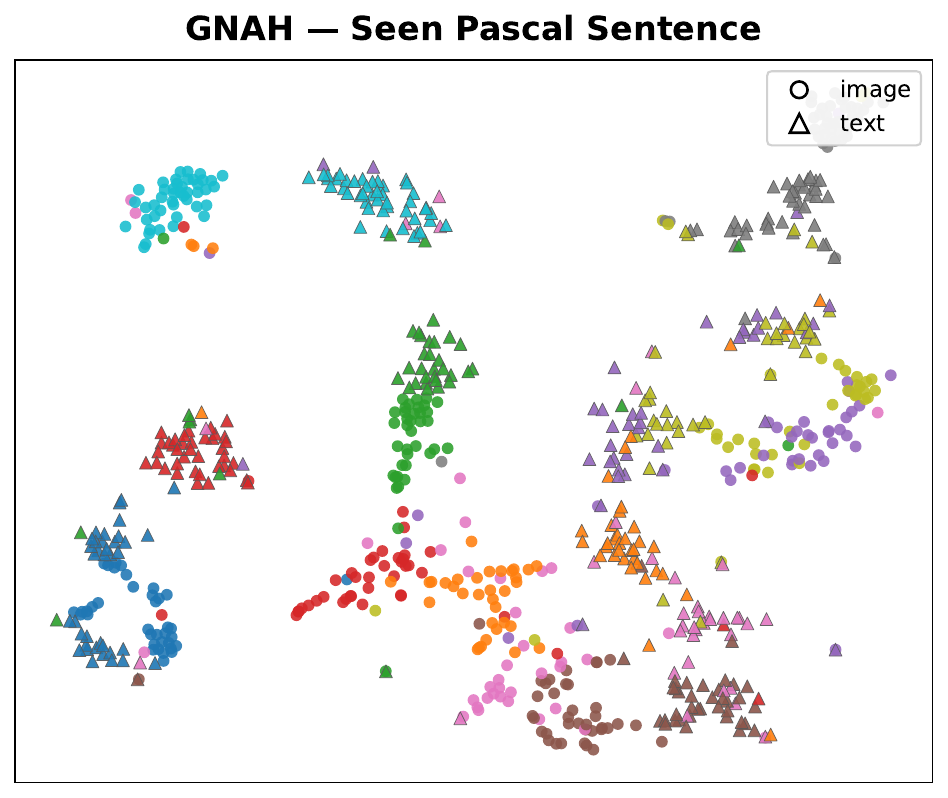}
    \includegraphics[width=0.24\textwidth]{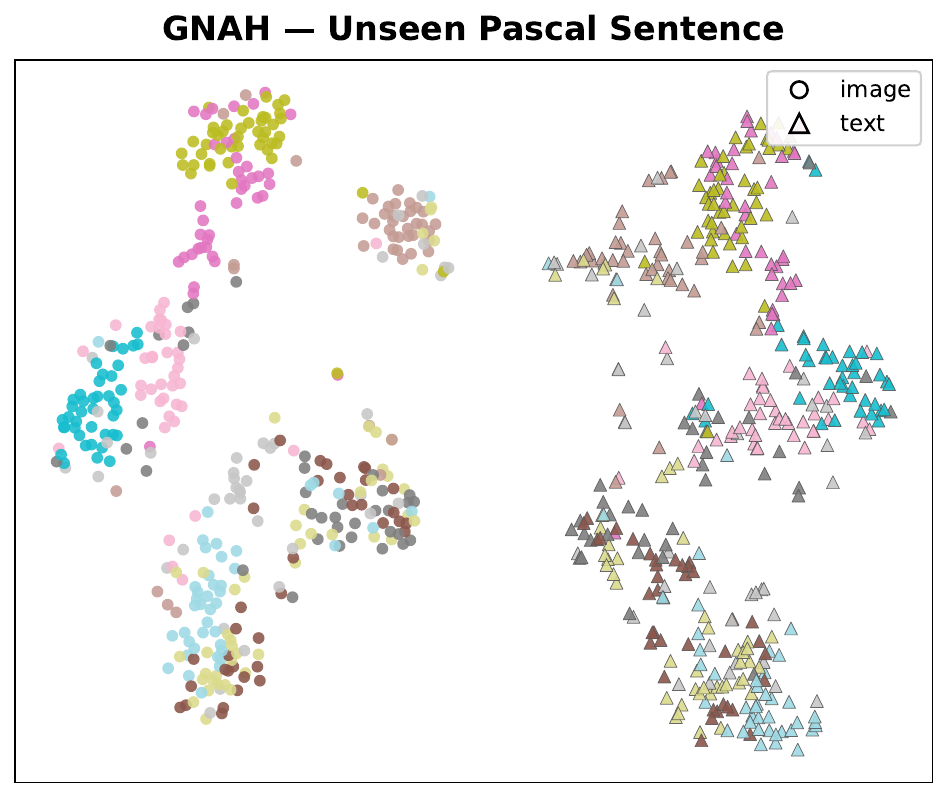}
    \includegraphics[width=0.24\textwidth]{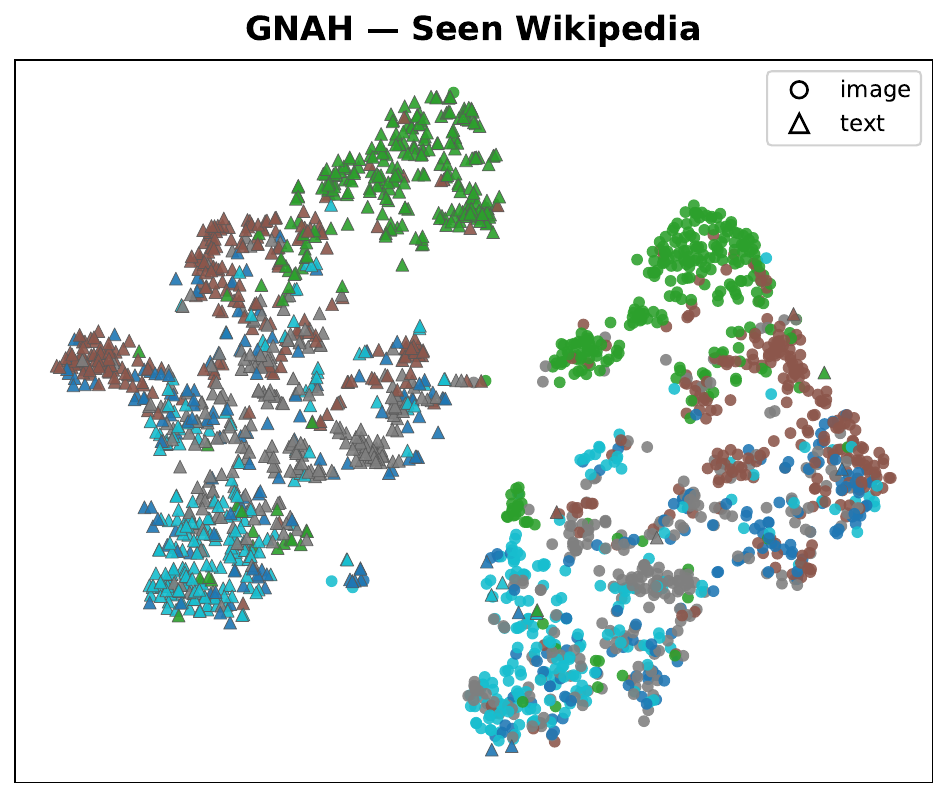}
    \includegraphics[width=0.24\textwidth]{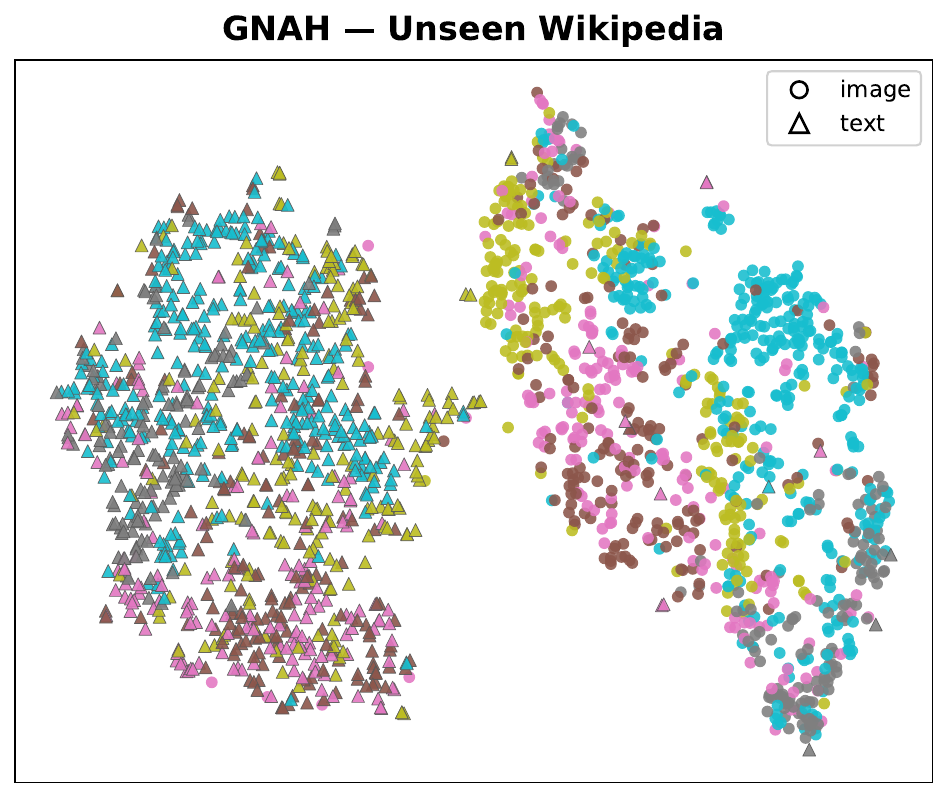}
    \includegraphics[width=0.24\textwidth]{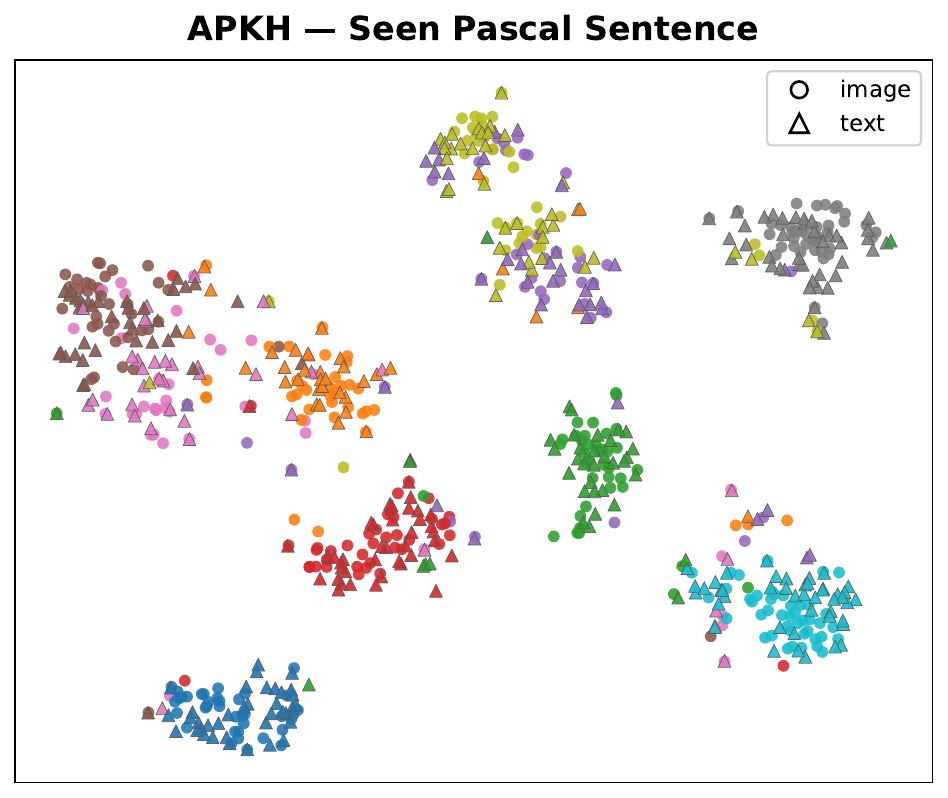}
    \includegraphics[width=0.24\textwidth]{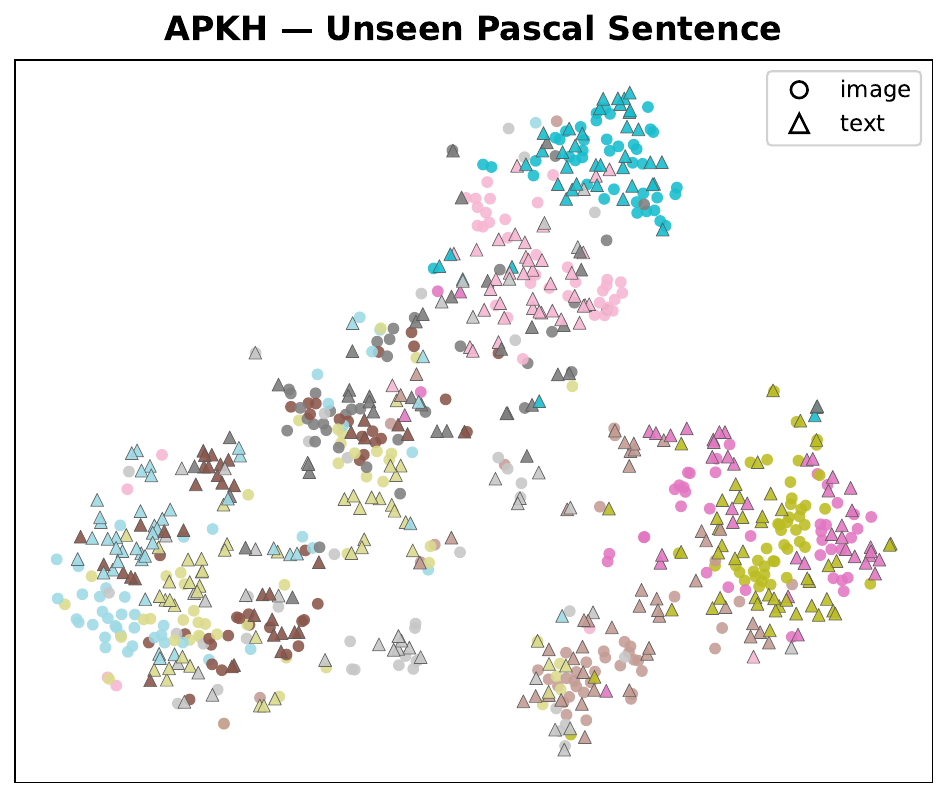}
    \includegraphics[width=0.24\textwidth]{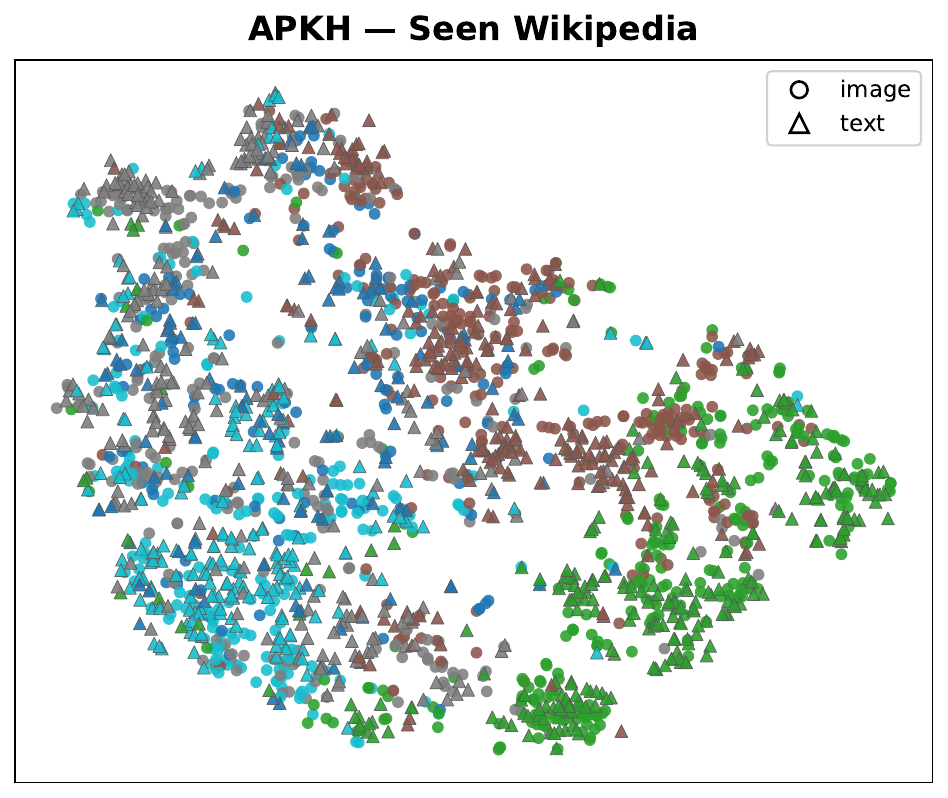}
    \includegraphics[width=0.24\textwidth]{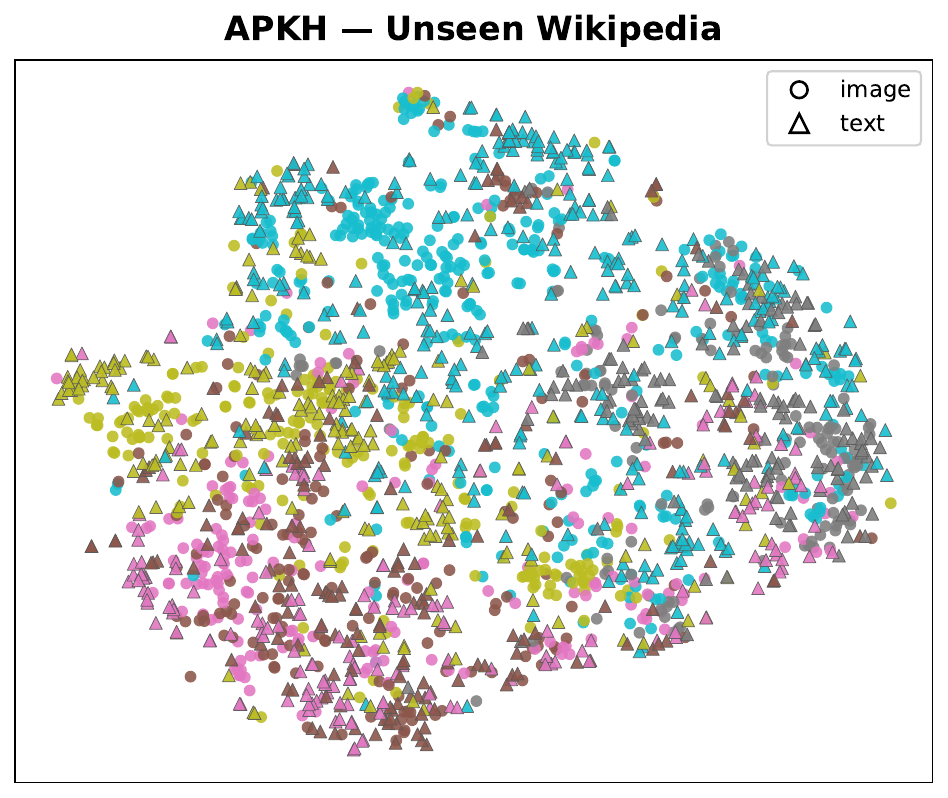}
    \caption{The t-SNE visualizations of the relaxed hash codes on Pascal Sentence and Wikipedia datasets. The plots compare the feature distributions of CAGAN, CIRH, GNAH, and the proposed APKH across seen and unseen categories. Different colors denote distinct semantic classes.}
    \label{fig:tsneResults}
\end{figure*}

\subsection{Cross-Dataset Transfer}
Table~\ref{tab:cross} details the performance for cross-dataset transfer experiments under the 40-pair, 64-bit configuration. In this evaluation, we use the same models trained on the ``seen" split of a source dataset as before and directly test them on the ``seen" splits of different target datasets to assess their robustness against distribution shifts and their domain generalization capabilities. The results indicate that APKH consistently exhibits superior transferability compared to existing unsupervised methods across all source-target combinations.  
For example, when using MIR Flickr as the source, APKH achieves an average mAP of 0.295 across all target datasets, significantly outperforming CIRH (0.235) and CAGAN (0.221). A similar trend is observed with the Pascal Sentence source dataset, where APKH reaches an mAP of 0.592 on the MIR target, marking a substantial improvement over the 0.465 achieved by CIRH and 0.440 by CAGAN. Furthermore, on the Wikipedia dataset, APKH maintains its performance lead with an overall average mAP of 0.463, notably surpassing its predecessor GNAH (0.433). It is worth noting that while the strict statistical regularization in APKH may slightly constrain its peak fitting performance on seen categories of Wikipedia, its superior generalization to target datasets outpaces existing methods. This underscores that APKH prioritizes learning robust, modality-invariant semantics over merely memorizing dataset-specific patterns.
These results demonstrate APKH successfully mitigates the performance degradation typically associated with cross-domain retrieval, proving its effectiveness for practical applications involving diverse and shifting data distributions.

\subsection{Visualization}
To intuitively illustrate the effectiveness of the proposed framework in bridging the modality gap and establishing a discriminative Hamming space, we visualize the generated hash codes using t-SNE. Fig.~\ref{fig:tsneResults} presents the feature distributions for representative baselines (CAGAN, CIRH), our preliminary method (GNAH), and the proposed APKH on the Pascal Sentence and Wikipedia datasets. The visualizations capture both seen and unseen categories, with distinct colors denoting different semantic classes and different markers representing the image and text modalities.

\textbf{Alignment in Seen Categories.} When looking at baseline methods like CAGAN and CIRH, a basic clustering pattern appears for seen categories, but the dividing lines between different semantic groups remain blurred. Additionally, text and image data points belonging to the same semantic class do not align closely, revealing an ongoing gap between the modalities. Although GNAH manages to gather semantics reasonably well, its image and text representations remain noticeably separated, which is an indicator of overfitting. In contrast, APKH shows a much stronger capacity to merge heterogeneous data into a single, cohesive space.

\textbf{Robustness in Unseen Categories.} A more significant difference becomes apparent in the more challenging unseen categories, where the limitations of the baseline methods are characterized by a widened modality gap. When encountering novel semantics, the baseline models fail to align heterogeneous data, causing their feature mappings to disintegrate into disjointed and fragmented distributions where image and text representations remain isolated from one another. APKH, conversely, effectively bridges this modality gap, maintaining clear class separation and structural soundness in the unseen domain. By utilizing CAKM and KSCA, it smooths out the semantic manifold instead of simply memorizing isolated data points. Consequently, the network can gracefully apply modality-independent semantic connections to tightly align image and text representations of new classes. This offers strong visual proof that APKH overcomes the severe modality gap seen in baselines, backing up the quantitative retrieval results and highlighting the framework's outstanding generalization capabilities under data scarcity.

\subsection{Ablation Study}

\textbf{Number of Attributes ($K$).} As shown in Fig.~\ref{fig:attribute}, the empirical evaluation of the attribute-prompted kernel hashing framework reveals that retrieval performance improves significantly as the number of attributes scales from 10 to 100. Beyond this initial growth, performance gains begin to level off after $K$ reaches 100, suggesting that this threshold is sufficient to capture the essential semantic information within the datasets. We employ all 620 attributes in the other parts of our experiments.

\textbf{Number of Context Tokens ($L$).} As shown in Fig.~\ref{fig:ctx}, we investigate the performance sensitivity w.r.t. the number of context tokens $L$ across the values of $\{1, 2, 4, 8, 16\}$. It can be observed that the cross-modal retrieval mAP increases initially and quickly reaches a stable plateau when $L \ge 2$ on both seen and unseen categories. This behavior indicates that a relatively short context length is already capable of providing sufficient parametric flexibility to steer the semantic attributes for modality alignment. Following the conventional configurations in the prompt tuning literature~\cite{yao2024tcp,li2025class}, we empirically select $L=4$ as the default token length for all other experiments to maintain an optimal balance between training efficiency and semantic expressiveness.

\begin{table*}[]
\renewcommand{\arraystretch}{1.1} 
\centering
\caption{Performance of APKH variants at 64 bits with 40 training sample pairs. \textbf{Bold} denotes the best result, \underline{underline} denotes the second-best result.}
\label{tab:ablation}
\footnotesize
\begin{tabular*}{0.99\textwidth}{@{\extracolsep{\fill}} l|ccccc|ccccc|ccccc @{}}
\toprule
\multirow{2}{*}{\textbf{Method}} & \multicolumn{5}{c|}{\textbf{Seen}} & \multicolumn{5}{c|}{\textbf{Unseen}} & \multicolumn{5}{c}{\textbf{Average}} \\
\cmidrule{2-6} \cmidrule{7-11} \cmidrule{12-16}
& MIR & NUS & PAS & WIKI & \textbf{AVG} & MIR & NUS & PAS & WIKI & \textbf{AVG} & MIR & NUS & PAS & WIKI & \textbf{AVG} \\
\midrule
APKH-A & 0.850 & 0.370 & 0.263 & 0.352 & 0.459 & 0.717 & 0.244 & 0.201 & 0.244 & 0.351 & 0.784 & 0.307 & 0.232 & 0.298 & 0.405 \\
APKH-B & 0.866 & 0.414 & \underline{0.559} & \textbf{0.411} & 0.563 & 0.713 & 0.284 & \underline{0.317} & \underline{0.345} & \underline{0.415} & 0.789 & 0.349 & \underline{0.438} & \textbf{0.378} & \underline{0.489} \\
APKH-C & \underline{0.868} & \underline{0.513} & 0.530 & 0.399 & \underline{0.577} & \textbf{0.719} & \underline{0.285} & 0.270 & 0.299 & 0.393 & \underline{0.794} & \underline{0.399} & 0.400 & 0.349 & 0.485 \\
APKH   & \textbf{0.874} & \textbf{0.525} & \textbf{0.592} & \underline{0.405} & \textbf{0.599} & \textbf{0.719} & \textbf{0.301} & \textbf{0.335} & \textbf{0.346} & \textbf{0.425} & \textbf{0.797} & \textbf{0.413} & \textbf{0.463} & \underline{0.376} & \textbf{0.512} \\
\bottomrule
\end{tabular*}
\end{table*}

\textbf{Number of Virtual Samples ($M$).} As shown in Fig.~\ref{fig:sampling}, we investigate the sensitivity of the proposed method with respect to the number of virtual samples $M \in \{1, 2, 5, 10, 20\}$. Interestingly, we observe that the retrieval performance remains relatively stable across different choices of $M$. This phenomenon can be attributed to our dynamic optimization strategy: since the virtual hash codes are re-sampled from the kernel distribution $\mathcal{P}_{i}$ at each training epoch, the continuous semantic manifold is already implicitly and thoroughly explored over the entire training course via this time-dynamic regularizer. Considering that a larger $M$ inevitably expands the contrastive denominator pool and increases the computational latency per epoch, we empirically set $M = 5$ to guarantee training efficiency without sacrificing the retrieval accuracy.

\textbf{Impact of $\alpha$.} As shown in Fig.~\ref{fig:alpha}, we investigate the impact of the hyperparameter $\alpha$ on cross-modal retrieval performance across different bit widths. The results are averaged between seen and unseen classes across all datasets. Overall, introducing a relatively small $\alpha$ ($\alpha<=1$) brings steady improvements in average performance across all bit widths compared to the baseline without binarization loss ($\alpha=0$). However, when $\alpha$ is set too large (e.g., $\alpha=100$), the retrieval performance experiences a severe collapse across all configurations. Furthermore, the onset of this negative effect is sensitive to the hash code length: the smaller the bit width, the earlier the performance degradation occurs. To prevent this performance degradation, we align with our preliminary work~\cite{11084616} and conservatively set $\alpha = 1$ across all bit lengths.

\begin{figure}[]
    \centering
    \includegraphics[width=0.45\textwidth]{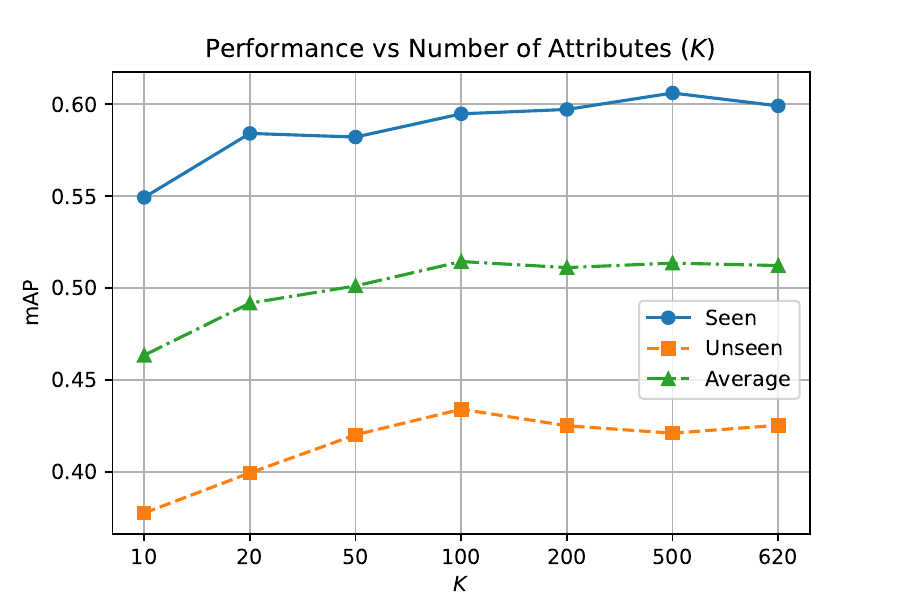}
    \caption{Impact of the number of attributes ($K$) on cross-modal retrieval performance. The results are averaged across all datasets.}
    \label{fig:attribute}
\end{figure}
\begin{figure}[]
    \centering
    \includegraphics[width=0.45\textwidth]{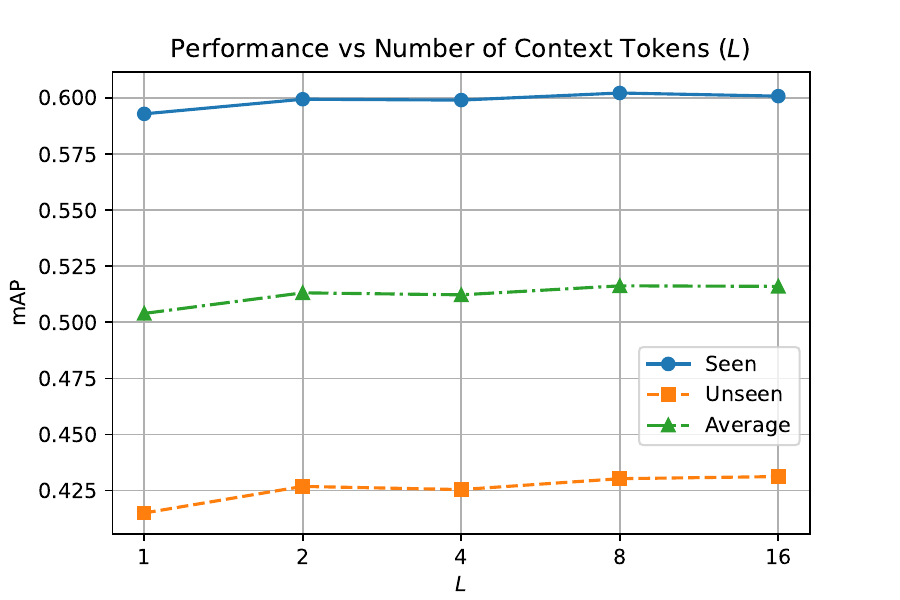}
    \caption{Impact of the number of context tokens ($L$) on cross-modal retrieval performance. The results are averaged across all datasets.}
    \label{fig:ctx}
\end{figure}

\begin{figure}[]
    \centering
    \includegraphics[width=0.45\textwidth]{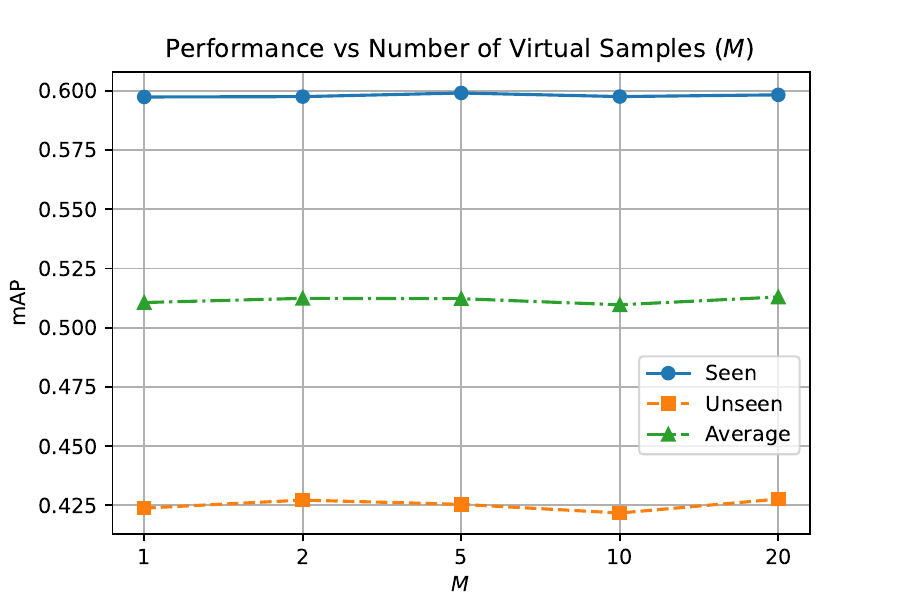}
    \caption{Impact of the number of virtual samples ($M$) on cross-modal retrieval performance. The results are averaged across all datasets.}
    \label{fig:sampling}
\end{figure}
\begin{figure}[]
    \centering
    \includegraphics[width=0.45\textwidth]{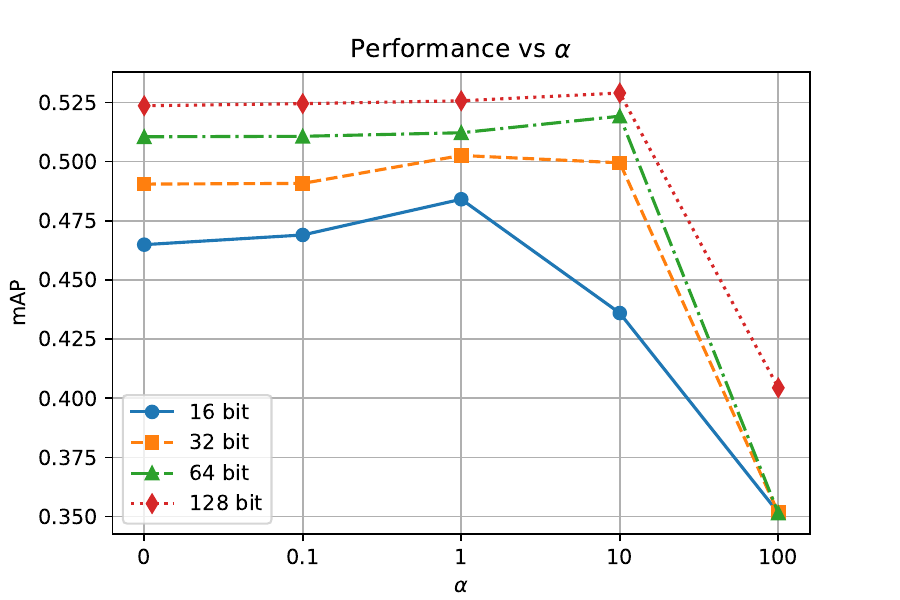}
    \caption{Impact of $\alpha$ on cross-modal retrieval performance. The results are averaged between seen and unseen categories across all datasets.}
    \label{fig:alpha}
\end{figure}

\textbf{Variants of APKH.} To investigate the impact of different components and explore alternative configurations, we construct several variants of APKH and show the performance comparisons in Table~\ref{tab:ablation}.
\textbf{APKH-A} generates kernels directly from attributes without using context optimization. This removal of context optimization results in the most severe performance degradation among all variants. Without learnable context prompts, the overall average mAP plummets from 0.512 (full APKH) to 0.405 (APKH-A). The performance drop is particularly notable in the unseen categories, where the average mAP falls from 0.425 to 0.351. This suggests that the raw attribute features from CLIP fall short for cross-modal retrieval tasks. Dynamically optimizing semantic attributes via prompt learning is critical for capturing modality-invariant semantics and bridging the modality gap.
\textbf{APKH-B} replaces the proposed KSCA with a standard contrastive loss. The overall average mAP drops to 0.489. This decline demonstrates the limitations of traditional discrete point-to-point matching. By modeling limited paired data as continuous distributions, the proposed KSCA effectively smooths the semantic gap and acts as a regularizer against overfitting.
\textbf{APKH-C} replaces the unified hash function with a conventional dual MLP hash function architecture. This modification reduces the overall average mAP to 0.485. Furthermore, its performance on unseen categories drops significantly to 0.393 (compared to the full model's 0.425). This confirms that mapping heterogeneous features into a shared set of kernel responses and processing them through a single unified network yields a more robust and generalizable Hamming space than relying on modality-specific projectors.

\section{Conclusion}
This paper presents Attribute-Prompted Kernel Hashing (APKH), a novel and correspondence-efficient unsupervised cross-modal retrieval framework designed to overcome the heavy reliance on large-scale image-text pairs and the poor zero-shot generalization of existing methods. To achieve this, APKH introduces two core modules: Context-optimized Attribute Kernel Mapping (CAKM), which leverages prompt learning and hyperspherical RBF kernel mapping to dynamically capture modality-invariant semantics and establish a robust alignment space, and Kernel-Smoothed Contrastive Alignment (KSCA), which explicitly smooths the modality gap by modeling limited paired data as continuous distributions rather than discrete points. By transitioning from discrete matching to continuous distribution alignment, KSCA acts as a powerful statistical regularizer that successfully mitigates overfitting to sparse pairwise correlations. Extensive experiments across four benchmark datasets demonstrate that APKH significantly outperforms state-of-the-art unsupervised cross-modal hashing methods, exhibiting improved generalization capability from seen to unseen categories under data-constrained scenarios.

\bibliographystyle{IEEEtran}
\bibliography{ref.bib}

\vfill

\end{document}